\newtheorem{theorem}{Theorem}[section]
\newenvironment{proof}[1][Proof]{\begin{trivlist}
		\item[\hskip \labelsep {\bfseries #1}]}{\end{trivlist}}
\newenvironment{remark}[1][Remark]{\begin{trivlist}
		\item[\hskip \labelsep {\bfseries #1}]}{\end{trivlist}}
\newcommand{\bea}{\begin{eqnarray*}}
	\newcommand{\eea}{\end{eqnarray*}}
\newcommand{\bean}{\begin{eqnarray}}
\newcommand{\eean}{\end{eqnarray}}
\newcommand{\bfX}{{\bf X}}
\newcommand{\sg}{\Sigma}
\newcommand{\what}{\widehat}
\newcommand{\lra}{\longrightarrow}
\newcommand{\calD}{\mathcal{D}}
\newcommand{\bbP}{\mathbb{P}} 
\newcommand{\bbR}{\mathbb{R}}
\newcommand{\bbE}{\mathbb{E}}
\begin{document}
	
	\title{Bayesian joint inference for multiple directed acyclic graphs}
	\author[1]{Kyoungjae Lee}
	\affil[1]{Department of Statistics, Inha university}
	\author[2]{Xuan Cao}
	\affil[2]{Department of Mathematical Sciences, University of Cincinnati}
	
	\maketitle
	\begin{abstract}
		In many applications, data often arise from multiple groups that may share similar characteristics.
		A joint estimation method that models several groups simultaneously can be more efficient than estimating parameters in each group separately.
		We focus on unraveling the dependence structures of data based on directed acyclic graphs and propose a Bayesian joint inference method for multiple graphs.
		To encourage similar dependence structures across all groups, a Markov random field prior is adopted.
		We establish the joint selection consistency of the fractional posterior in high dimensions, and benefits of the joint inference are shown under the common support assumption.
		This is the first Bayesian method for joint estimation of multiple directed acyclic graphs.
		The performance of the proposed method is demonstrated using simulation studies, and it is shown that our joint inference outperforms other competitors. 
		We apply our method to an fMRI data for simultaneously inferring multiple brain functional networks.
	\end{abstract}
	
	Key words: Joint selection consistency, Markov random field prior, Cholesky factor
	

	\section{Introduction}

	Suppose we observe data from the following $K$ groups,
	\bean\label{model_K}
	X_{k,1} ,\ldots, X_{k, n_k}  \mid \Omega_k &\overset{ind.}{\sim}&  N_p ( 0,  \Omega_k^{-1} )  , \,\, k=1,\ldots, K ,
	\eean
	where $\Omega_k \in \bbR^{p\times p}$ is the precision matrix of the $k$th group.
	Here, $N_p(\mu, \sg)$ denotes the $p$-dimensional normal distribution with the mean vector $\mu \in \bbR^p$ and covariance matrix $\sg \in \bbR^{p\times p}$.
	We are interested in investigating the dependence structures of each multivariate data set, especially in high-dimensional settings.
	To consistently recover the dependence structure of multivariate data, various sparsity assumptions have been suggested for high-dimensional covariance matrices \citep{cai2010optimal,cai2012optimal}, precision matrices \citep{banerjee2015bayesian,ren2015asymptotic} and Cholesky factors \citep{lee2017estimating,cao2019posterior}.
	In this paper, we focus on sparse Cholesky factors, whose sparsity patterns are related to directed acyclic graph (DAG) models.
	Our goal is to develop a theoretically supported Bayesian method for jointly estimating multiple DAGs under a sparsity assumption.

	In many applications, data are collected from multiple groups that share similar characteristics.
	For examples, gene expression levels are often measured over the patients with different subtypes \citep{cai2016joint, liu2019jointDAG}, where the DAGs may vary across subtypes but share similar structures.
	Then, joint estimation can be more efficient than estimating each DAG separately.
	Another motivation for this type of problem comes from neuroimaging studies.
	In neuroimaging studies, it is common to explore the changes in functional connectivity for different brain regions through the progression of a certain disease. Taking the Parkinson's disease (PD) as an example, during the progression of PD, some patients may develop the comorbidity of depression, and others may not. Neuroscientists are interested in learning the complex interactions that govern brain connectivity networks and contribute to the onset of depression. In such applications, statistical methods for jointly estimating multiple DAGs can serve as a powerful tool to gain insight into the underlying neurological mechanism.

	
	When data are collected from a homogeneous population, many statistical methods for estimating high-dimensional sparse Cholesky factors have been developed.
	\cite{shojaie2010penalized} proposed a penalized likelihood method based on a lasso-type penalty and derived its convergence rate.
	\cite{van2013ell} showed the convergence rate of the $\ell_0$-penalized maximum likelihood estimator for sparse Cholesky factors.
	Recently, \cite{khare2019scalable} developed a convex sparse Cholesky selection, by using a reparameterization trick, and proved the convergence rate and selection consistency in a moderate high-dimensional setting.
	From a Bayesian perspective, \cite{ben2015high} introduced a class of DAG-Wishart priors for sparse DAG models, and \cite{cao2019posterior} showed the posterior convergence rate and selection consistency of hierarchical DAG-Wishart priors.
	Based on the autoregressive model representation of a Gaussian DAG model, \cite{lee2019minimax} developed an empirical sparse Cholesky prior.
	They showed that the proposed prior attains the minimax optimal posterior convergence rate as well as the selection consistency under mild conditions.
	However, the above methods are lack of sharing information across graphs when estimating multiple graphs with similar structures.

	To infer data sets from heterogeneous populations, various methods have been proposed for estimating multiple graphical models, i.e., precision matrices, by \cite{danaher2014joint}, \cite{cai2016joint}, 	\cite{peterson2015bayesian} and \cite{gan2019bayesian}, to name a few.	
	On the other hand, only few joint inference methods for multiple DAGs have been proposed in the literature.
	\cite{wang2020high} proposed the joint greedy equivalence search for estimating multiple DAGs and proved its convergence rate  under the Frobenius norm.
	They showed that the cardinality of the union of estimated DAGs has the same rate with that of the union of true DAGs. 
	Recently, \cite{liu2019jointDAG} proposed a two-step method, called the multiple PenPC, to jointly estimate the skeletons of DAGs and showed the joint selection consistency of the skeletons in high-dimensional settings.
	To the best of our knowledge, no Bayesian method, which enjoys theoretical guarantees in high-dimensional settings, has yet been suggested for multiple DAGs.

	In this paper, we propose a prior for Bayesian joint inference, called the joint empirical sparse Cholesky prior, for multiple DAGs in high-dimensional settings.
	We show that the proposed prior achieves the joint selection consistency under mild conditions, which means that the marginal posterior at the true DAGs converges to one as more data are collected (Theorem \ref{thm:selection}).
	To the best of our knowledge, this is the first work that has established the joint selection consistency for multiple DAGs under a Bayesian framework.
	We also prove theoretical benefits of the joint inference under the common support assumption. 
	Specifically, it is shown that the proposed method attains the joint selection consistency under much weaker beta-min conditions (Theorems \ref{thm:advan2} and \ref{thm:advan3}) compared with separate inferences.
	In simulation studies, our joint inference method outperforms the other state-of-the-art methods including frequentist joint estimators and Bayesian separate inferences especially in high overlapping scenarios.
	These finding support our motivation for joint inference: when multiple DAGs share similar structures, joint estimation can be more efficient than separate estimations.

	The rest of paper is organized as follows.
	Section \ref{sec:prel} introduces multiple Gaussian DAG models, the joint empirical sparse Cholesky prior and the fractional posterior distribution.
	In Section \ref{sec:main}, we show the joint selection consistency of the proposed method and benefits of the joint inference compared with separate inferences.
	The finite sample performance of our method is investigated in Section \ref{sec:simul}, and we conduct a real data analysis using a functional magnetic resonance imaging (fMRI) dataset in Section \ref{sec:real}.
	Section \ref{sec:disc} concludes the paper with a discussion.
	The proofs of the main results are given in Section \ref{sec:proof}.

	\section{Preliminaries}\label{sec:prel}

	\subsection{Multiple Gaussian DAG models}
	
	For a given precision matrix $\Omega \in \bbR^{p\times p}$, let $\Omega = (I_p - A)^T D^{-1} (I_p - A)$ be its modified Cholesky decomposition (MCD), where $A = ( a_{ jl} )$ is a lower triangular matrix with $a_{ jj} = 0$ and $D = diag(d_{j})$ with $d_{j}>0$, for all $j=1,\ldots, p$.	
	Then, it is well known that $X = (X_1,\ldots, X_p)^T \sim N_p(0, \Omega^{-1})$ can be represented as a sequence of linear autoregressive models as follows:
	\bea
	X_1 \mid d_j &\sim& N (0, d_1)  ,\\
	X_j \mid a_{S_j}, d_j, S_j &\sim& N \Big( \sum_{l \in S_j} X_l a_{jl}, d_j \Big) , \,\, j=2,\ldots, p,
	\eea
	where $a_{S_j} = (a_{j l})^T_{l \in S_j} \in \bbR^{|S_j|}$, $S_j \subseteq \{1,\ldots, j-1\}$ and $|S_j|$ is the cardinality of $S_j$ \citep{bickel2008regularized}.
	The support of the Cholesky factor, $\{S_2,\ldots, S_p\}$, determines the DAG, $\calD = (V, E)$.
	Here, $V = \{1,\ldots, p\}$ is a set of vertices, and $E$ is a set of directed edges, where $\{l \to j  \}\in E$ if and only if $a_{jl} \neq 0$.
	In this paper, we assume that a parent ordering of variables is known in which no edges exist from larger vertices to smaller vertices.
	The above model is called the Gaussian DAG model.
		
	Similarly, for a given $1\le k\le K$, we denote the MCD of $\Omega_k$ by $\Omega_k = (I_p - A_k)^T D_k^{-1} (I_p - A_k)$, where $A_k = ( a_{k, jl} )$ and $D_k = diag(d_{kj})$.
	Let $S_{kj} = ( S_{k, j1}, \ldots, S_{k, j j-1} ) \in \{0,1\}^{j-1}$ be the support of the $j$th row of $A_k$ with $S_{k, jl} = I(a_{k, jl} \neq 0 )$.
	With a slight abuse of notation, if there is no confusion, $S_{kj}$ is sometimes used to denote the set of nonzero indices in the $j$th row of $A_k$, i.e., $S_{kj} = \{ l : a_{k,jl} \neq 0 \} \subseteq \{1,\ldots, j-1\}$.
	We denote the data from the $k$th group and the whole data by $\bfX_{k} = ( X_{k,1},\ldots, X_{k, n_k})^T \in \bbR^{n_k \times p}$ and $\tilde{\bfX}_n = (\bfX_1^T, \ldots, \bfX_K^T )^T \in \bbR^{n \times p}$, respectively, where $n =\sum_{k=1}^K n_k$. 
	Then, model \eqref{model_K} can be expressed as follows:
	\bean\label{model_reg}
	\begin{split}
		\bfX_{k,1} \mid d_{kj} \,\,&\overset{ind.}{\sim}\,\, N_{n_k} ( 0, d_{kj} I_{n_k} ) , \\
		\bfX_{k,j} \mid a_{k, S_{kj}} , d_{kj} , S_{kj} \,\,&\overset{ind.}{\sim}\,\,  N_{n_k} \Big( \bfX_{k,S_{kj}} a_{k, S_{kj}} , \,\, d_{kj} I_{n_k}  \Big) , \,\, j=2,\ldots, p ,\,\, k=1,\ldots, K,
	\end{split}
	\eean
	where $a_{k, S_{kj}} = (a_{k, jl} )^T_{l \in S_{kj}} \in \bbR^{|S_{kj}|}$ and $\bfX_{k, S} \in \bbR^{n_k \times |S|}$ is the submatrix consisting of $S$th columns of $\bfX_{k}$ for any $S \subseteq \{1,\ldots, j-1\}$.
	We call model \eqref{model_reg} the multiple Gaussian DAG models.
	Note that the lower triangular part of  $A_k$ can be seen as a set of regression vectors, thus we can use a prior tailored to each row of the sparse regression coefficient vectors.
	We assume that the sample size for each group, $n_k$, can be different across all groups.
	We consider the high-dimensional setting in which $p \ge n$ and allow the number of groups, $K$, grow to infinity as we observe more data.

	\subsection{Joint empirical sparse Cholesky priors}

	\cite{lee2019minimax} proposed the empirical sparse Cholesky (ESC) prior for a sparse DAG model on the basis of the interpretation \eqref{model_reg}.
	In this paper, we extend this prior to deal with multiple DAGs.	
	For given $1\le k \le K$ and $S_{kj}$, we use the following conditional prior for $A_k$ and $D_k$:
	\bean\label{a_and_d}
	\begin{split}
		a_{k , S_{kj}} \mid d_{kj} , S_{kj}  \,\,&\overset{ind.}{\sim}\,\,  N_{|S_{kj}|} \Big(  \what{a}_{k, S_{kj}}  , \,\, \frac{d_{kj} }{\gamma} \big( \bfX_{k, S_{kj}}^T \bfX_{k, S_{kj}}  \big)^{-1}  \Big)   ,  \quad j=2,\ldots, p, \\
		\pi(d_{kj}) \,\,&\propto\,\,   d_{kj}^{ - \nu_0/2 -1 } , \quad  j=1,\ldots, p,
	\end{split}
	\eean
	for some positive constants $\gamma$ and $\nu_0$, where $\what{a}_{k, S_{kj}} = ( \bfX_{k, S_{kj}}^T \bfX_{k, S_{kj}} )^{-1} \bfX_{k, S_{kj}}^T \bfX_{k,j} $.
	This corresponds to the ESC prior when $K=1$.	
	Note that the conditional prior for $a_{k, S_{kj}}$ is an empirical version of the Zellner's $g$-prior \citep{zellner1986assessing} centered at $\what{a}_{k, S_{kj}}$, and the prior for $d_{kj}$ becomes the Jeffreys prior \citep{jeffreys1946invariant} when $\nu_0 =0$.
	
	For joint inference on multiple DAGs, given an integer $j \in \{2,\ldots, p\}$, we propose the following joint prior for $(S_{1j} ,\ldots, S_{Kj})$:
	\bean
	\pi( S_{1j},\ldots, S_{Kj} ) &\propto&  f( S_{1j}, \ldots, S_{Kj} )  \prod_{k=1}^K \pi(S_{kj})  , \label{joint_S_prior}
	\eean
	where 
	\bea
	\pi(S_{kj}) &\propto&  \binom{j-1}{|S_{kj}|}^{-1} p^{- c_1 |S_{kj}|} I( 0\le |S_{kj}| \le R_j)  
	\eea
	for some positive integers $0< R_j \le j-1$.
	Here, $\pi(S_{kj}) $ plays a role as a penalty term for the model size $|S_{kj}|$, which prefers sparse models.
	Similar priors have been used in the literature including \cite{martin2017empirical} and \cite{lee2019minimax}.
	For $f( S_{1j}, \ldots, S_{Kj} )$ in \eqref{joint_S_prior}, we suggest using the following Markov random field (MRF) type prior to reflect the expectation that different groups share similar DAG structures:
	\bea
	f( S_{1j}, \ldots, S_{Kj} )  &=& \exp \Big\{   c_{2j} \sum_{l=1}^{j-1} \tilde{S}_{jl}^T (1_K 1_K^T - I_K ) \tilde{S}_{jl}   \Big\} \\
	&=&   \exp \Big\{  2 c_{2j} \sum_{l=1}^{j-1} \sum_{k < k'} I( S_{k, jl} = S_{k', jl} = 1 )     \Big\}    ,\quad  j=2,\ldots, p 
	\eea
	for some constant $c_{2j}>0$, where $\tilde{S}_{jl} = (S_{1, jl} ,\ldots, S_{K, jl} )^T$ and $1_K = (1,\ldots, 1)^T \in \bbR^K$.
	This MRF prior encourages similar patterns of sparsity for $(S_{1j}, \ldots, S_{Kj})$.
	\cite{peterson2015bayesian} used a similar MRF prior for inferring multiple graphical models.
	By putting together priors \eqref{a_and_d} and \eqref{joint_S_prior}, we propose a prior for multiple DAGs,
	\bea
	\pi( \Omega_1, \ldots, \Omega_K  ) 
	&\propto&  \prod_{j=2}^p \pi( S_{1j}, \ldots, S_{Kj} )    
	\prod_{k=1}^K \Big\{  \prod_{j=2}^p  \pi(a_{k, S_{kj}} \mid d_{kj}, S_{kj}) \prod_{j=1}^p  \pi(d_{kj}) \Big\}  ,
	\eea 
	which we call the joint empirical sparse Cholesky (JESC) prior hearafter.
	
	
	\subsection{$\alpha$-fractional posterior}
	
	We adopt the fractional likelihood framework, which has received increasing attention in recent years \citep{martin2014asymptotically,martin2017empirical,lee2019minimax}.
	Let $\theta$ and $L(\theta)$ be a parameter and a likelihood function, respectively.
	For a given constant $\alpha \in (0,1)$, $\alpha$-fractional likelihood $L_\alpha(\theta)$ is the likelihood with power $\alpha$, i.e., $\{L(\theta)\}^\alpha$.	
	Based on the JESC prior and $\alpha$-fractional likelihood, we have the following posterior distributions:
	\bea
	a_{k, S_{kj}} \mid d_{kj} , S_{kj}, \bfX_{k} &\overset{ind.}{\sim}& N_{|S_{kj}|} \Big( \what{a}_{k, S_{kj}} ,\,\,  \frac{d_{kj}}{\alpha + \gamma} \big( \bfX_{k, S_{kj}}^T \bfX_{k, S_{kj}} \big)^{-1} \Big)  ,\,\, j=2,\ldots, p , \\
	d_{kj} \mid S_{kj} , \bfX_k &\overset{ind.}{\sim}&  IG \Big(  \frac{\alpha n_k + \nu_0}{2} , \,\, \frac{\alpha n_k}{2} \what{d}_{k, S_{kj}} \Big)  , \,\, j=1,\ldots, p,
	\eea
	and
	\bea
	\pi_\alpha ( S_{1j},\ldots, S_{Kj}  \mid \tilde{\bfX}_n )  &\propto& \pi(S_{1j},\ldots, S_{Kj}) \prod_{k=1}^K f_\alpha ( \bfX_{n_k} \mid S_{kj}) , \,\, j=2,\ldots, p, 
	\eea
	where $\widehat d_{k, S_{kj}} = n_k^{-1} \bfX_{k,j}^T(I_{n_k} - \tilde P_{S_{kj}}) \bfX_{k,j}$, $\tilde P_{S_{kj}} = \bfX_{k,S_{kj}}( \bfX_{k,S_{kj}}^T \bfX_{k,S_{kj}})^{-1} \bfX_{k,S_{kj}}^T$ and 
	\bea
	f_\alpha( \bfX_{n_k} \mid S_{kj} )  &=&   \iint   L_\alpha (a_{k, S_{kj}} , d_{kj}, S_{kj}) \pi( a_{k, S_{kj}} \mid d_{kj} , S_{kj}) \pi( d_{kj} \mid S_{kj}) d a_{k, S_{kj}} \, d d_{kj} \\
	&\propto& \Big(1 + \frac{\alpha}{\gamma} \Big)^{- \frac{|S_{kj}|}{2}}  (\what{d}_{k, S_{kj}})^{-\frac{\alpha n_k + \nu_0}{2} } .
	\eea
	We denote the posterior by $\pi_\alpha (\cdot \mid \tilde{\bfX}_n)$ to indicate that the $\alpha$-fractional likelihood is used, and call it the $\alpha$-fractional posterior.
	To conduct the posterior inference for $(S_{1j},\ldots, S_{Kj})$, the Metropolis-Hastings within Gibbs algorithm can be used.
	The details are given in Section \ref{subsec:posterior}.
	Once we have posterior samples of $(S_{1j},\ldots, S_{Kj})$, the posterior samples of $a_{k, S_{kj}} $ and  $d_{kj}$ can be directly drawn from the normal and inverse-gamma distributions, respectively.

	\section{Main Results}\label{sec:main}
	
	\subsection{Joint selection consistency}\label{subsec:joint}
	
	In this section, we establish the joint selection consistency of the proposed JESC prior, which guarantees that we can recover the true DAGs asymptotically.
	Let $\Omega_{0k}$ be the true precision matrix of the $k$th class, for $k=1,\ldots, K$.	
	Let $\Omega_{0k} = (I_p - A_{0k})^T D_{0k}^{-1}(I_p - A_{0k})$ be the MCD of $\Omega_{0k}$, where $A_{0k} = ( a_{0k ,jl} )$ and $D_{0k} = diag( d_{0k, j} )$.
	We denote $S_{A}$ as the support of the matrix $A= (a_{jl})$, i.e., $S_{A} = ( I(a_{jl} \neq 0) )$.
	We first introduce the following sufficient conditions for true parameters:	\\
		
	\noindent{\bf{Condition (A1)}} There exists a constant $0<\epsilon_0 <0.5$ such that $\epsilon_0 \le \min_{1\le k\le K} \lambda_{\min}(\Omega_{0k}) \le \max_{1\le k\le K} \lambda_{\max}(\Omega_{0k}) \le \epsilon_0^{-1}$.
	
	\noindent{\bf{Condition (A2)}} $\max_{1\le k\le K} \max_{2\le j \le p} \sum_{l=1}^{p} I( a_{0k, jl} \neq 0 )  \le s_0$ for some $1\le s_0 \le p$.
	
	\noindent{\bf{Condition (A3)}} For some constant $C_{\rm bm}>0$,
	\bea
	\min_{1\le k \le K} \min_{(j,l): a_{0k,jl} \neq 0 } n_k \, a_{0k, jl}^2   &\ge&  \frac{16}{\alpha(1-\alpha) \epsilon_0^2 (1-2 \epsilon_0)^2 } C_{\rm bm} \log p.
	\eea
	
	\noindent{\bf{Condition (A4)}} $K = o(\log p)$.\\

	Condition (A1) implies that the eigenvalues of each precision matrix $\Omega_{0k}$ are bounded.
	This condition is used to obtain upper bounds of $d_{0k,j}, d_{0k, j}^{-1}$ and $\|A_{0k}\|$.
	Similar conditions have been used in, for examples, \cite{ren2015asymptotic}, \cite{khare2019scalable} and \cite{lee2019minimax}.

	Condition (A2) controls the maximum number of nonzero entries in each row of $A_{0k}$.
	This condition allows the upper bound $s_0$ to grow to infinity as $n$ get larger.
	Note that the estimation of each row of $A_{0k}$ can be considered as the estimation of regression coefficient vector, thus introducing this condition seems natural.

	Condition (A3) is the well known {\it beta-min} condition for the minimum nonzero entries of each Cholesky factor, $A_{0k}$.
	This roughly means that the lower bound for nonzero $a_{0k, jl}^2$ is of order $O( \log p /n_k )$.
	The beta-min condition is essential for consistent variable selection in high-dimensional linear regression models \citep{martin2017empirical, yang2016computational} and Gaussian DAG models \citep{yu2017learning,cao2019posterior}.
	Note that if we assume $k=1$ and $n_k = n$, then the rate of the lower bound in condition (A3) becomes $\log p/n$, which is the best (minimum) beta-min condition in the literature.

	Condition (A4) restricts the number of classes.
	Note that $K$ can grow to infinity as $n\to\infty$ at a rate slower than $\log p$.
	\cite{cai2016joint} and \cite{wang2020high} used similar condition for joint estimation of high-dimensional precision matrices and DAGs, respectively.  \\

	\noindent{\bf{Condition (P)}} $\nu_0 = o( \min_k n_k ), c_1 >2, c_{2j} \le 1/(j-1)$ and $\gamma = O(1)$. 
	For some small $0 < c_3  < (\epsilon')^2 \epsilon_0^2 / \{ 128(1+2 \epsilon_0)^2 \}$ and $\epsilon' = \{(1-\alpha)/10 \}^2$, we assume that $R_j = \lfloor \{ (\log n)^{-1} \vee c_3 \} \min_k n_k /\log p \rfloor $. \\
	
	Condition (P) shows a sufficient condition for hyperparameters in the JESC prior to obtain the desired theoretical properties, where ``P'' stands for ``prior''.
	The constant $c_1$ controls the penalty for the sparsity of Cholesky factors, thus the condition $c_1>2$ gives the minimum strength of the penalty.
	The constant $c_{2j}$ in the MRF prior controls the penalty for similarities across the DAGs, thus $c_{2j} \le 1/(j-1)$ implies that the effect of the MRF prior should not be too strong.
	This intuitively makes sense because if $c_{2j}$ is too large and dominates the other priors and likelihoods, then the posterior will always select the full model, i.e., $S_{kj} = \{ 1,\ldots, j-1 \}$ for all $1\le k \le K$ and $2\le j \le p$.
	The condition $R_j = \lfloor \{ (\log n)^{-1} \vee c_3 \} \min_k n_k /\log p \rfloor $ implies that the maximum number of nonzero entries in each row of $A_{0k}$ should at least be of order $\min_k n_k / \log p$ for the consistent selection.
	In finite samples, we suggest choosing $R_j = \lfloor \min_k n_k (\log p \, \log n)^{-1} \rfloor$.
	In Section \ref{subsec:posterior}, we will give a practical guidance for the choice of hyperparameters.

	\begin{theorem}[Joint selection consistency]\label{thm:selection}
		Suppose that conditions (A1)-(A4) and (P) hold with $C_{\rm bm} > c_1 +2$.
		Then, if $s_0 \log p \le \min_k n_k c_3/2$ and $s_0 \ge C_{\rm bm} - c_1 -1$, we have
		\bea
		\bbE_0 \Big\{  \pi_\alpha \Big( S_{A_1} = S_{A_{01} }, \ldots, S_{A_K} = S_{A_{0K}}  \mid \tilde{\bfX}_n    \Big)   \Big\}   &\lra& 1 \quad \text{ as }  \min_k n_k\to\infty.
		\eea
	\end{theorem}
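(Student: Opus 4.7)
The plan is to exploit the factorization of the JESC prior across rows $j$ to reduce the multi-row consistency claim to a per-row argument and then take a union bound. Since the joint support prior $\pi(S_{1j},\ldots,S_{Kj})$ and the conditional priors on $(a_{k,S_{kj}},d_{kj})$ are specified independently across $j$, and the likelihood factors as a product over $j$ under model \eqref{model_reg}, the $\alpha$-fractional posterior factors as
\[
\pi_\alpha\big(S_{A_1},\ldots,S_{A_K}\mid \tilde{\bfX}_n\big) \;=\; \prod_{j=2}^p \pi_\alpha\big(S_{1j},\ldots,S_{Kj}\mid \tilde{\bfX}_n\big).
\]
By a union-bound / Bernoulli inequality argument, it therefore suffices to prove
\[
\sum_{j=2}^p \bbE_0\Big\{\pi_\alpha\big((S_{1j},\ldots,S_{Kj}) \ne (S_{0,1j},\ldots,S_{0,Kj}) \mid \tilde{\bfX}_n\big)\Big\} \;\lra\; 0.
\]

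For each fixed $j$, I would expand the posterior probability of an alternative configuration as a ratio against the truth, which splits into three factors: (i) the sparsity-prior ratio $\prod_k \binom{j-1}{|S_{0,kj}|}\binom{j-1}{|S_{kj}|}^{-1} p^{-c_1(|S_{kj}|-|S_{0,kj}|)}$; (ii) the MRF ratio induced by $c_{2j}$; and (iii) the product over $k$ of the closed-form fractional marginal likelihood ratios $(1+\alpha/\gamma)^{(|S_{0,kj}|-|S_{kj}|)/2}(\what d_{k,S_{kj}}/\what d_{k,S_{0,kj}})^{-(\alpha n_k+\nu_0)/2}$. Following the single-DAG template of \cite{lee2019minimax}, I would partition alternatives into the \emph{overfitted} case, where every $S_{kj}\supseteq S_{0,kj}$ with at least one strict inclusion, and the \emph{non-overfitted} case, where some $S_{kj}$ fails to contain the true support. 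In the overfitted case, concentration of the chi-squared type quadratic form $n_k(\what d_{k,S_{kj}}-\what d_{k,S_{0,kj}})/d_{0k,j}$ yields a likelihood ratio bounded by $p^{o(1)}$ per extra edge, so the size penalty $p^{-c_1}$ with $c_1>2$ tames the superset entropy. In the non-overfitted case, condition (A3) with $C_{\rm bm}>c_1+2$ forces $\what d_{k,S_{kj}}/\what d_{k,S_{0,kj}}$ to exceed $1+C\log p/n_k$ with high $\bbP_0$-probability per missing true edge, producing a per-edge likelihood gain of order $p^{C_{\rm bm}}$ that overwhelms every other factor.

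The MRF ratio is the genuinely new ingredient relative to the single-DAG analysis. Because $c_{2j}\le 1/(j-1)$, the change in the MRF exponent attributable to any single differing sub-column is at most $c_{2j}\binom{K}{2}\le K^2/(j-1)$, and under condition (A4) the induced factor can be charged \emph{locally}, coordinate by coordinate, against either the per-edge sparsity penalty $p^{c_1}$ (in the overfitted case) or the per-edge likelihood gain $p^{C_{\rm bm}}$ (in the non-overfitted case), rather than configuration-wise. The combinatorial union bound at each $j$ is then handled, group by group, by the Laplace-type identity $\sum_{s=0}^{R_j}\binom{j-1}{s}p^{-c_1 s}\le (1+p^{-c_1})^{j-1}$, and the $K$-fold product is kept under control by $K=o(\log p)$; the outer sum over $j=2,\ldots,p$ is absorbed by the exponent gap $C_{\rm bm}-c_1-2>0$, giving per-column bounds that are summable in $p$.

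The hard part will be the non-overfitted case when several groups simultaneously miss distinct subsets of true edges: the likelihood gains accumulate multiplicatively across $k$ while the MRF penalty couples the groups, so the argument must track the joint tail behavior of $\{\what d_{k,S_{kj}}\}_{k=1}^K$ and match it against a configuration-dependent combined prior bound. Controlling the exceptional $\bbP_0$-events uniformly over all $j\in\{2,\ldots,p\}$, all $k\in\{1,\ldots,K\}$, and all sub-configurations requires a carefully organized triple union bound; conditions (A3), (A4), (P), the cap $|S_{kj}|\le R_j$ with $R_j$ of order $\min_k n_k/\log p$, and the hypothesis $s_0\log p\le \min_k n_k c_3/2$ together guarantee that the large-deviation exponents dominate this union-bound size and close the proof.
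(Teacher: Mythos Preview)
Your reduction to a row-wise union bound, the three-factor split of the posterior ratio, the per-edge control of the MRF factor via $c_{2j}\le 1/(j-1)$ together with $K=o(\log p)$, and the appeal to the single-DAG analysis of \cite{lee2019minimax} for the marginal likelihood ratios are all correct and are exactly the ingredients the paper uses.

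The organizational choice, however, differs in a way worth noting. You partition alternatives at row $j$ into \emph{overfitted} (every $S_{kj}\supseteq S_{0k,j}$) versus \emph{non-overfitted} (some $S_{kj}\nsupseteq S_{0k,j}$). The paper instead stratifies by the \emph{set of groups that differ from the truth}: for each $1\le l\le K$ and each $\{k_1,\ldots,k_l\}$ it considers $N_{k_1,\ldots,k_l}=\{S_j: S_{kj}\neq S_{0k,j}\text{ iff }k\in\{k_1,\ldots,k_l\}\}$. On this stratum the posterior ratio factors as $\prod_{m=1}^l \pi_\alpha^I(S_{k_m j}\mid\bfX_{n_{k_m}})/\pi_\alpha^I(S_{0k_m,j}\mid\bfX_{n_{k_m}})$ times an MRF ratio bounded by $\exp\{c_{2j}(j-1)lK\}\le e^{lK}$. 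Each of the $l$ independent factors is then handled exactly by the $K=1$ results (Lemma~6.1 and the proof of Theorem~3.1 in \cite{lee2019minimax}), yielding a contribution $\lesssim (e^K p^{-\{(C_{\rm bm}-c_1-1)\wedge(c_1-1)\}})^l$; summing over $\binom{K}{l}$ choices and then over $l$ gives a geometric series dominated by its first term, which is $o(p^{-1})$.

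The advantage of the paper's stratification is that it dissolves precisely the ``hard part'' you flag. In your overfitted/non-overfitted split, the non-overfitted case still contains mixed configurations in which some groups strictly overfit while others miss signals; bounding the product $\prod_k$ there forces you to combine superset-type and subset-type estimates simultaneously and to track how the exceptional-set probabilities from different groups interact. The stratification by the set of differing groups sidesteps this entirely: groups with $S_{kj}=S_{0k,j}$ contribute a factor $1$, and for each remaining group the full single-DAG dichotomy (superset vs.\ non-superset, with the event $N_{S_{kj},\alpha,\chi^2}$) is applied independently. Your per-edge MRF charging would also go through, but the paper's configuration-level bound $e^{lK}$ matched to the $l$-fold product is what makes the bookkeeping short.
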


	Theorem \ref{thm:selection} presents the joint selection consistency for multiple DAGs.
	It is worth comparing our result with those in \cite{liu2019jointDAG} in terms of the required conditions.
	To obtain consistency, they assumed $\min_k \lambda_{\min}( \sg_{0k, AA} ) \ge C_1$ and $\max_k \sg_{0k,jj} < C_2$ for any $A \in \{1,\ldots, p\}$ with $|A|\le q$ and some constants $C_1$ and $C_2>0$, which is weaker than condition (A1), where $\sg_{0k, AA} = ( \sg_{0k, jl} )_{j,l \in A}$.
	It was also assumed that $n_k \asymp n$ for all $k=1,\ldots,K$. 
	Note that this condition implies  $K= O(1)$, thus it is stronger than our condition (A4).
	They further assumed $p = O( \exp (n^a) )$ and $q = \max_j |A_j | = O(n^b)$ for some constants $a \in[0,1)$ and $b \in [0, (1-a)/2)$, where $A_j = \cup_{k=1}^K A_j^{(k)}$ and $A_j^{(k)} = \{l  : \Omega_{0k, jl} \neq 0 \text{ and } l \neq j \}$.
	By Lemma 1 in \cite{liu2019jointDAG}, $q = O(n^b)$ implies  $s_0 \le |\cup_{k=1}^K S_{0k,j}| = O(n^b)$, thus it is slightly more restrictive than our conditions, (A2) and $s_0 \log p \le \min_k n_k c_3/2$.
	They used the beta-min conditions,
	\bean\label{betamin_liu}
	\min_{1\le k \le K} \min_{(j,l): \Omega_{0k,jl} \neq 0 }  \Big| \frac{\Omega_{0k,jl}}{\Omega_{0k, jj}}  \Big|  &\gtrsim& n^{-d_1} \quad \text{ and } \quad n^{-d_2} \lesssim | \rho_{jl | S}^{(k)} | \le M < 1 ,
	\eean
	for some $0<d_1 < (1-a-b)/2$, $0< d_2 < \{1- (a\vee b)\}/2$ and  any $S \in \Pi_{jl}^{(k)}$, where $\Pi_{jl}^{(k)} = \{ A_{jl}^{(k)} \setminus D_{jl}^{(k)} : D_{jl}^{(k)} \subseteq C_{jl}^{(k)}  \}$, $A_{jl}^{(k)}$ is the Markov blanket of $j$ and $l$ after removing their common children and descendants, and $C_{jl}^{(k)}$ is the set of common children or descendants.
	Note that \eqref{betamin_liu} consists of two beta-min conditions to guarantee selection consistency in each step.
	Although their beta-min conditions are not directly comparable with ours, the squares of the lower bounds in \eqref{betamin_liu}, $n^{-2d_1}$ and $n^{-2d_2}$, are much larger than $\log p/n_k $ in condition (A3).
	Therefore, we obtain the joint selection consistency under weaker conditions on $K$, $s_0$ and minimum nonzero signals than those in \cite{liu2019jointDAG}.

	\begin{theorem}\label{thm:advan1}
		Let $\pi^I( S_{kj } \mid \bfX_{n_k}) \propto f_\alpha (\bfX_{n_k} \mid S_{kj}) \pi(S_{kj}) $ be the independence posterior for $S_{kj}$.
		Suppose that there exists $1\le k \le K$ such that $\cup_{k' \neq k} S_{0k',j} \subseteq S_{0k,j} $.
		Then, for any $j=2\ldots, p$, we have 
		\bean\label{cond_prob_inc}
		\pi_\alpha \big(  S_{0k, j} \mid S_{01,j},\ldots, S_{0 k-1, j} , S_{0 k+1, j},\ldots, S_{0K, j} , \tilde{\bfX}_n  \big)   
		&\ge& \pi_\alpha^I ( S_{0k, j}  \mid \bfX_{n_k} )  .
		\eean
	\end{theorem}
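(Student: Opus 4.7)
The plan is to derive the conditional posterior of $S_{kj}$ given the remaining true supports, isolate the contribution coming from the MRF prior, and show that this contribution only reinforces $S_{0k,j}$ relative to competing models. Starting from the joint $\alpha$-fractional posterior
$$\pi_\alpha(S_{1j},\ldots,S_{Kj} \mid \tilde{\bfX}_n) \;\propto\; f(S_{1j},\ldots,S_{Kj})\prod_{k'=1}^K \pi(S_{k'j})\, f_\alpha(\bfX_{n_{k'}} \mid S_{k'j}),$$
I would substitute $S_{k'j} = S_{0k',j}$ for every $k'\neq k$ and collect the factors that depend on $S_{kj}$. All likelihood and size-prior factors indexed by $k'\neq k$ become constants, leaving the ordinary independence kernel $\pi(S_{kj})\, f_\alpha(\bfX_{n_k}\mid S_{kj})$ multiplied by the $S_{kj}$-dependent slice of the MRF prior.

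Expanding $f(S_{1j},\ldots,S_{Kj})$ via its definition, the pairs $(k',k'')$ with $k',k''\neq k$ contribute a constant, while the pairs involving $k$ contribute
$$g(S_{kj}) \;:=\; \exp\!\Big\{\, 2c_{2j}\sum_{l=1}^{j-1} I(l\in S_{kj})\, m_l \Big\}, \qquad m_l := \sum_{k'\neq k} I(l\in S_{0k',j}).$$
Under the common-support assumption $\cup_{k'\neq k}S_{0k',j}\subseteq S_{0k,j}$, any index $l$ with $m_l>0$ already lies in $S_{0k,j}$; equivalently, $m_l=0$ for every $l\notin S_{0k,j}$. Consequently $\sum_{l\in S_{kj}} m_l \le \sum_{l\in S_{0k,j}} m_l$ for every admissible $S_{kj}$, so $g(S_{0k,j})$ maximises $g$ and the ratio $g(S_{kj})/g(S_{0k,j})$ is at most one uniformly in $S_{kj}$.

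The proof then closes by rewriting
$$\pi_\alpha\big(S_{0k,j}\mid \{S_{0k',j}\}_{k'\neq k}, \tilde{\bfX}_n\big) \;=\; \frac{\pi(S_{0k,j})\, f_\alpha(\bfX_{n_k}\mid S_{0k,j})}{\sum_{S'} \big[\,g(S')/g(S_{0k,j})\,\big]\,\pi(S')\, f_\alpha(\bfX_{n_k}\mid S')},$$
and noting that replacing each weight $g(S')/g(S_{0k,j})\le 1$ by $1$ weakly increases the denominator; the resulting lower bound is exactly $\pi_\alpha^I(S_{0k,j}\mid \bfX_{n_k})$, which is the desired inequality. There is no serious analytic obstacle; the only conceptual step, and the only place the hypothesis enters, is recognising that the common-support assumption makes $S_{0k,j}$ a maximiser of the linear functional $S_{kj}\mapsto \sum_{l\in S_{kj}} m_l$, so the MRF prior acts as a nonnegative tilt toward the true support of the dominating group.
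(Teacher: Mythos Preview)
Your proof is correct and follows essentially the same approach as the paper: both isolate the $S_{kj}$-dependent slice of the MRF prior, use the hypothesis $\cup_{k'\neq k}S_{0k',j}\subseteq S_{0k,j}$ to show that this slice is maximised at $S_{0k,j}$, and deduce the inequality from the resulting bound on the ratio. The paper phrases the last step as an odds-ratio comparison rather than your denominator-inflation argument, but the two are equivalent.
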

	
	Theorem \ref{thm:advan1} shows that the joint inference increases the conditional posterior probability at the true DAGs compared to the separate inference.
	Note that $\cup_{k' \neq k} S_{0k',j} \subseteq S_{0k,j} $ holds if and only if 
	\bean\label{f_inc}
	f( S_{01,j},\ldots, S_{0 k-1, j}, S_{kj} , S_{0 k+1, j},\ldots, S_{0K, j} ) &\le& f( S_{01,j},\ldots, S_{0 k-1, j} , S_{0k,j }, S_{0 k+1, j},\ldots, S_{0K, j} )
	\eean
	for any $S_{kj} \neq S_{0k,j}$.
	For example, \eqref{f_inc} trivially holds if we assume the common support, i.e., $S_{01,j} = \cdots = S_{0K,j}$.

	\subsection{Benefits of joint inference}\label{subsec:benefit}
	
	In this section, the theoretical benefits of the joint inference, compared with  separate inferences, are presented.
	Although investigating benefits of the joint inference under heterogeneous DAGs is important, it is very challenging to explore every possible scenario.
	Thus, we focus on the case where all Cholesky factors share a common support, i.e., all DAGs share a common structure.
	For example, \cite{cai2016joint} and \cite{gan2019bayesian} also used the common support assumption for multiple precision matrices and showed advantages of the joint estimation.
	In this case, we suggest using the restricted posterior to the space of common supports,
	\bea
	\tilde{\pi}_\alpha ( S_{A} \mid \tilde{\bfX}_n )  &=& \frac{\pi_\alpha ( S_{A_1}= \cdots =S_{A_K} = S_{A} \mid \tilde{\bfX}_n ) }{\sum_{S_{A}} \pi_\alpha ( S_{A_1}= \cdots =S_{A_K} = S_{A} \mid \tilde{\bfX}_n )  }  .
	\eea
	To prove the joint selection consistency of $\tilde{\pi}_\alpha ( S_{A} \mid \tilde{\bfX}_n )$, we introduce a weakened beta-min condition as follows: \\	
	
	\noindent{\bf{Condition (B3)}} For some constant $C_{\rm bm}>0$,
	\bea
	 \min_{(j,l): a_{01,jl} \neq 0 }  \sum_{k=1}^K n_k \, a_{0k,jl}^2   &\ge&  \frac{16}{\alpha(1-\alpha) \epsilon_0^2 (1-2 \epsilon_0)^2 } C_{\rm bm}  K \, \log p  .
	\eea
	
	Note that condition (A3) implies (B3), thus we call condition (B3) a weakened beta-min condition.
	If we assume that $n_1=\cdots = n_K$, then condition (B3) roughly means that the lower bound for $K^{-1}\sum_{k=1}^K \min_{(j,l): a_{0k,jl} \neq 0 } a_{0k,jl}^2$ is of order $O(\log p / n_k)$.
	Thus, we can consistently recover the true support as long as the {\it average} of minimum signals is significant, even if  minimum signals of some classes are quite small.
	This can be seen as the benefit of the joint inference, and the following theorem states the desired result.

	\begin{theorem}[Benefit of joint inference]\label{thm:advan2}
		Assume that $S_{A_{01}} = \cdots = S_{A_{0K}} \equiv S_{0}$ and $ K \log p = o( \min_k n_k )$.
		Then, under the same condition with Theorem \ref{thm:selection}, except using condition (B3) instead of (A3), we have
		\bea
		\bbE_0 \big\{ \tilde{\pi}_\alpha ( S_{A}  = S_0 \mid \tilde{\bfX}_n )   \big\} &\lra& 1 \quad \text{ as } \min_k n_k\to\infty .
		\eea 
	\end{theorem}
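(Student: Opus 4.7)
The plan is to adapt the proof of Theorem~\ref{thm:selection} by substituting the per-group beta-min bound with an aggregated bound enabled by common support and condition (B3). Since the JESC prior and the $\alpha$-fractional likelihood factor across rows $j=2,\ldots,p$, so does the restricted posterior, and it suffices to show for each fixed $j$ that $\sum_{S \neq S_{0,j}} \mathrm{PR}_j(S) \to 0$ in $\bbP_0$-probability, where $\mathrm{PR}_j(S) := \{\pi(S,\ldots,S)/\pi(S_{0,j},\ldots,S_{0,j})\}\prod_{k=1}^K f_\alpha(\bfX_{n_k}\mid S)/f_\alpha(\bfX_{n_k}\mid S_{0,j})$; Markov's inequality then delivers the claim. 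A direct computation from the MRF form gives prior ratio at most $\exp\{2c_{2j}\binom{K}{2}(|S|-|S_{0,j}|)\}\binom{j-1}{|S|}^{-K}\binom{j-1}{|S_{0,j}|}^{K}p^{-c_1 K(|S|-|S_{0,j}|)}$, and condition (P) ($c_{2j}\le 1/(j-1)$) together with condition (A4) forces the MRF factor to contribute at most $\exp(K^2) = p^{o(1)}$.

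I would split $\{S : S\neq S_{0,j}\}$ into overfitted ($S\supsetneq S_{0,j}$) and non-overfitted ($S\not\supseteq S_{0,j}$) branches. For the overfitted branch, the same $\chi^2$-concentration argument used in Theorem~\ref{thm:selection} yields, on a high-probability event, $\prod_{k=1}^K (\hat d_{k,S}/\hat d_{k,S_{0,j}})^{-(\alpha n_k+\nu_0)/2}\le \exp(c K(|S|-|S_{0,j}|))$. Combined with the $g$-prior factor $(1+\alpha/\gamma)^{-K(|S|-|S_{0,j}|)/2}$, the penalty $p^{-c_1 K(|S|-|S_{0,j}|)}$, the combinatorial count $\binom{j-1-|S_{0,j}|}{|S|-|S_{0,j}|}\le p^{|S|-|S_{0,j}|}$, and the sparsity cap $|S|\le R_j$, the assumption $c_1>2$ controls this branch exactly as in the unrestricted case.

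For the non-overfitted branch, set $\Delta := |S_{0,j}\setminus S|\ge 1$. The common-support assumption guarantees that each $l\in S_{0,j}\setminus S$ satisfies $a_{0k,jl}\neq 0$ for \emph{every} $k$, so the total squared signal $\sum_{k=1}^K n_k a_{0k,jl}^2$ is bounded below by condition (B3). Comparing with the intermediate overfitted model $S\cup S_{0,j}$ and applying a $\chi^2$-style expansion plus Gaussian-quadratic-form concentration together with (A1) yield, on a high-probability event, $\prod_{k=1}^K (\hat d_{k,S_{0,j}}/\hat d_{k,S})^{\alpha n_k/2}\le \exp\{-c_0\alpha\sum_{l\in S_{0,j}\setminus S}\sum_{k=1}^K n_k a_{0k,jl}^2\}\le p^{-c_0' C_{\rm bm} K \Delta}$. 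Balancing this against the worst-case prior ratio (at most $p^{c_1 K \Delta}$, attained when $|S|<|S_{0,j}|$) and the counting factor $\binom{|S_{0,j}|}{\Delta}\binom{j-1-|S_{0,j}|}{|S\setminus S_{0,j}|}\le p^{\Delta+|S\setminus S_{0,j}|}$, the hypothesis $C_{\rm bm}>c_1+2$ makes the total exponent of $p$ strictly negative uniformly in $\Delta\ge 1$ and in $|S\setminus S_{0,j}|\ge 0$, so the non-overfitted sum also vanishes.

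The main obstacle is the uniform-in-$(k,S)$ concentration inequality in the third paragraph: one needs the lower bound on $\log(\hat d_{k,S}/\hat d_{k,S_{0,j}})$ to hold simultaneously for $k=1,\ldots,K$ and for every $S$ with $|S|\le R_j$, after a union bound whose cost must be absorbed by the exponential tail. This is precisely where condition (A4) and the extra hypothesis $K\log p = o(\min_k n_k)$ enter: the former keeps the class-wise union cost manageable, while the latter ensures uniform control of the empirical Gram matrices $\bfX_{k,S}^T\bfX_{k,S}/n_k$, so that it is the aggregated signal of (B3), rather than a per-group signal as in (A3), that drowns the prior penalty. Once that uniform event is secured, the remaining bookkeeping over classes and the aggregation under (B3) are routine.
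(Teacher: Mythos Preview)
Your proposal is correct and follows essentially the same route as the paper: the same overfitted/non-overfitted split, the same $\chi^2$-concentration control on the overfitted side, and the same aggregation of $\sum_k n_k a_{0k,jl}^2$ via (B3) on the non-overfitted side, with the MRF factor bounded by $\exp\{K(K-1)\}=p^{o(1)}$. The only cosmetic difference is that the paper bounds expectations of the posterior ratios directly (citing Lemma~6.1 and Theorem~3.1 of \cite{lee2019minimax}) rather than first passing to a uniform high-probability event, and it uses $K\log p = o(\min_k n_k)$ specifically to absorb the Gram-matrix exceptional probability $\sum_k \exp(-c\,n_k)$ into the main $\exp\{-C_{\rm bm}K\log p\}$ term---which is exactly the ``uniform Gram control'' you describe.
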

	
	Note that $ K \log p = o( \min_k n_k )$ trivially holds if we assume $(\log p)^2 = o( \min_k n_k )$, by condition (A4).
	\cite{cai2016joint} assumed $K^{2a - 1} \log p \, (\log n)^2 = o( \min_k n_k)$ and $\max(K , K^{4-a}\log K ) = o(\log p)$ for some constant $a>0$.
	The second condition is comparable to our condition (A4) when $a =3$, and then the first condition becomes $K^5 \log p \, (\log n)^2 = o(\min_k n_k)$.
	Thus, our condition $ K \log p = o( \min_k n_k )$ is much weaker than that of \cite{cai2016joint}.

	In fact, if we slightly modify the prior for $S_A$, we can further weaken the beta-min condition.
	Define the modified prior for $(S_{1j},\ldots, S_{Kj})$ as
	\bea
	\tilde{\pi}(S_{1j},\ldots, S_{Kj}) &\propto& \pi(S_{1j},\ldots, S_{Kj})^{1/K}  \\
	&\propto& f(S_{1j},\ldots, S_{Kj})^{1/K} \prod_{k=1}^K \pi(S_{kj})^{1/K}  \\
	&\equiv& \tilde{f}(S_{1j},\ldots, S_{Kj}) \prod_{k=1}^K \tilde{\pi}(S_{kj}) ,
	\eea
	and let $\tilde{\pi}_\alpha^* ( S_{A} \mid \tilde{\bfX}_n )  = \tilde{\pi}_\alpha^* (S_{A_1}=\cdots = S_{A_K} = S_{A} \mid \tilde{\bfX}_n ) $ be the restricted posterior to the space of common supports using the prior $\tilde{\pi}(S_{1j},\ldots, S_{Kj})$ instead of $\pi(S_{1j},\ldots, S_{Kj})$.
	Then, it suffices to assume the following condition (C3) instead of condition (B3) to obtain the joint selection consistency: \\
	
	\noindent{\bf{Condition (C3)}} For some constant $C_{\rm bm}>0$,
	\bea
	\min_{(j,l): a_{01,jl} \neq 0 }  \sum_{k=1}^K  n_k \, a_{0k,jl}^2   &\ge&  \frac{16}{\alpha(1-\alpha) \epsilon_0^2 (1-2 \epsilon_0)^2 } C_{\rm bm}  \log p  .
	\eea
	
	\begin{theorem}[Benefit of joint inference II]\label{thm:advan3}
		Assume that $S_{A_{01}} = \cdots = S_{A_{0K}} \equiv S_{0}$.
		Then, under the same condition with Theorem \ref{thm:selection}, except using condition (C3) instead of (A3), we have
		\bea
		\bbE_0 \big\{ \tilde{\pi}_\alpha^* ( S_{A}  = S_0 \mid \tilde{\bfX}_n )   \big\} &\lra& 1 \quad \text{ as } \min_k n_k\to\infty .
		\eea 
	\end{theorem}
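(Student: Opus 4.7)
The plan is to parallel the proofs of Theorems \ref{thm:selection} and \ref{thm:advan2}, tracking the point at which the factor $K$ enters the beta-min condition and showing that tempering the prior by $1/K$ removes it exactly. Using $\prod_k \tilde{\pi}(S_A) \propto \pi(S_A)$ when all supports coincide, the restricted posterior under the modified prior has the explicit form
\[
\tilde\pi_\alpha^*(S_A\mid \tilde{\bfX}_n)\;\propto\;f(S_A,\ldots,S_A)^{1/K}\,\pi(S_A)\,\prod_{k=1}^K f_\alpha(\bfX_{n_k}\mid S_A).
\]
Compared with the restricted posterior in Theorem \ref{thm:advan2}, whose sparsity penalty is $\pi(S_A)^K\asymp p^{-c_1 K|S_A|}$, the modified version replaces this by $\pi(S_A)\asymp p^{-c_1|S_A|}$. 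It is precisely this factor of $K$ in the exponent that, in the unmodified argument, forced the right-hand side of (B3) to carry the extra $K$; deflating the prior by $1/K$ restores the unadorned $\log p$ bound of (C3).

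After this observation, it suffices to show that
\[
\sum_{S_A \neq S_0}\frac{\tilde\pi_\alpha^*(S_A\mid\tilde{\bfX}_n)}{\tilde\pi_\alpha^*(S_0\mid\tilde{\bfX}_n)}\;=\;o_{\bbP_0}(1),
\]
and upgrade to convergence in expectation via the same uniform-boundedness step used in Theorems \ref{thm:selection} and \ref{thm:advan2}. I would split the sum row-by-row in $j$ and then, within each $j$, into overfitted ($S_A\supsetneq S_0$) and non-overfitted ($S_A\not\supseteq S_0$) common supports. For overfitted models the per-class likelihood-ratio bound from Theorem \ref{thm:selection} yields a factor at most $\prod_k\exp\{c\,\alpha(|S_A|-|S_0|)\}$ on a high-probability event; this is beaten by the tempered prior ratio $p^{-c_1(|S_A|-|S_0|)}$ (margin $c_1>2$), while the MRF contribution $\exp\{c_{2j}(K-1)(|S_A|-|S_0|)\}$ is $p^{o(|S_A|-|S_0|)}$ by $c_{2j}\le 1/(j-1)$ and $K=o(\log p)$ from (A4). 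The combined bound is geometric and sums over the at most $p^{|S_A|-|S_0|}$ excess-size supports.

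For non-overfitted models the critical step is to aggregate the $K$ per-class likelihood ratios into a single exponential in the total signal. Following the chi-squared / moderate-deviation argument underlying (A3) in Theorem \ref{thm:selection}, but applied jointly to the $K$ independent groups, one would establish that on a high-probability event, for any $l\in S_0\setminus S_A$,
\[
\prod_{k=1}^K \frac{f_\alpha(\bfX_{n_k}\mid S_A)}{f_\alpha(\bfX_{n_k}\mid S_0)}\;\le\;\exp\Bigl\{-\frac{\alpha(1-\alpha)\epsilon_0^2(1-2\epsilon_0)^2}{16}\sum_{k=1}^K n_k\,a_{0k,jl}^2\Bigr\}.
\]
Under (C3) this is at most $p^{-C_{\rm bm}}$, one factor of $p$ rather than $p^K$. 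Multiplying by the tempered prior ratio $p^{c_1(|S_A|-|S_0|)}$ and absorbing the $p^{o(1)}$ MRF factor yields $p^{-(C_{\rm bm}-c_1-o(1))}$ per missed edge, and the combinatorial enumeration of at most $p^{s_0+|S_A\setminus S_0|}$ such supports is defeated by the margin $C_{\rm bm}>c_1+2$ just as in Theorem \ref{thm:selection}.

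The step I expect to be the main obstacle is the aggregation just described: one must avoid a spurious factor of $K$ that would arise from a naive per-class union bound, which would yield only (B3) rather than (C3). Because the $K$ groups are independent Gaussian regressions, the pooled quantity $\sum_k n_k(\what d_{k,S_A}-\what d_{k,S_0})$ behaves like a single non-central chi-squared with mean $\sum_k n_k a_{0k,jl}^2$, so the tail bound from Theorem \ref{thm:selection} can be invoked once at the pooled level rather than $K$ separate times. The MRF factor is the other subtle piece, but, as in the overfitted case, is controlled by (A4) together with $c_{2j}\le 1/(j-1)$. Everything else is a direct adaptation of the argument in Theorem \ref{thm:advan2}.
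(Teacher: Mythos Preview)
Your proposal is correct and follows essentially the same route as the paper's proof: the row-wise split into $S_j\supsetneq S_{0j}$ and $S_j\nsupseteq S_{0j}$, the observation that $\prod_k\tilde\pi(S_j)/\tilde\pi(S_{0j})=\pi(S_j)/\pi(S_{0j})$ removes the factor $K$ from the sparsity penalty, and the aggregation of the $K$ per-class likelihood ratios into a single exponential in $\sum_k n_k\|a_{0k,S_{0j}\cap S_j^c}\|_2^2$ so that condition (C3) (rather than (B3)) suffices. The only cosmetic difference is that the paper obtains the aggregated exponential by multiplying the $K$ per-class deterministic bounds from \cite{lee2019minimax} on the events $N^c_{S_j,\alpha,\chi^2}$, rather than by treating the pooled quantity as a single non-central chi-squared; both yield the same $\exp\{-c\sum_k n_k a_{0k,jl}^2\}$ factor you display.
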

	
	Theorem \ref{thm:advan3} shows the advantage of the joint inference based on the restricted posterior $\tilde{\pi}_\alpha^* ( S_{A} \mid \tilde{\bfX}_n )  $: it only requires condition (C3), which is much weaker than condition (B3). 
	Compared with Theorems \ref{thm:selection} and \ref{thm:advan2}, it reveals that, under the common support assumption, we can obtain the joint selection consistency as long as the {\it summation} of minimun signals is significant.
	Note that the lower bound in condition (C3) coincides with that in (A3).
	\cite{cai2016joint} used a similar beta-min condition to condition (C3) for the nonzero entries of precision matrices, but using $\log K \, \log p$ instead of $\log p$.
	Hence, our beta-min condition is weaker than their in terms of the rate.
	Also note that $\prod_{k=1}^K \tilde{\pi}(S_{kj}) \propto \pi(S_{1j}) I(S_{1j}=\cdots= S_{Kj} )$ when $S_{1j}=\cdots = S_{Kj}$.
	Thus, this implies that it is sufficient to use a single penalty (prior) for all $K$ classes rather than use a penalty for each class.

\section{Simulation Studies} \label{sec:simul}
In this section, we carry out simulation studies to illustrate the model selection performance of our method and show its potential benefits over other contenders.
\subsection{Posterior inference} \label{subsec:posterior}
The use of the JESC prior not only guarantees the asymptotic properties but also allows us to easily conduct the posterior inference. Recall that for $j = 2, \ldots, p$,
\begin{eqnarray*}
	&&  \pi_\alpha(S_{1j}, \ldots, S_{Kj} \mid \tilde{\bfX}_n)  \\
	&\propto& \prod_{k = 1}^{K} \Big(1 + \frac \alpha \gamma\Big)^{-\frac{|S_{kj}|} 2} \big(\widehat d_{k, S_{kj}}\big)^{-\frac{\alpha n_k + \nu_0}2} \binom{j - 1}{|S_{kj}|}^{-1}   p^{-c_1 |S_{kj}|}  I\big(0 \le |S_{kj}| \le R_j\big) \\
	&& \times \exp\bigg\{c_2\sum_{l = 1}^{j - 1}\tilde S_{jl}^T(1_K1_K^T - I_K)\tilde S_{jl}\bigg\},
\end{eqnarray*}
where $\tilde S_{jl} = (S_{1,jl}, \ldots, S_{K,jl})^T$. Hence, we can run the Metropolis-Hastings within Gibbs sampling algorithm for each $j = 2, \ldots, p$ in parallel. Here, we briefly summarize the algorithm used for the inference:
\begin{enumerate}
	{\setlength\itemindent{-15pt} \item[] Run the following steps for $j = 2, \ldots, p$.} 
	\item  Set the initial values $S_{1j}^{(1)}, \ldots, S_{Kj}^{(1)}$.
	\item For each $t = 2, \ldots, T$, run the following steps for $k = 1, \ldots, K$.
	\begin{enumerate}
		\item sample $S_{kj}^{new} \sim q\big(\cdot\mid S_{kj}^{(t)}\big)$;
		\item set $S_{kj}^{(t)} = S_{kj}^{new}$ with the probability
		\bea
		&& \min \Bigg\{1, \frac{\pi_\alpha(S_{kj}^{new}\mid S_{1j}^{(t)}, \ldots, S_{k-1,j}^{(t)}, S_{k+1,j}^{(t-1)}, \ldots, S_{Kj}^{(t-1)}, \tilde{\bfX}_n)q(S_{kj}^{(j-1)}\mid S_{kj}^{new})}{\pi_\alpha(S_{kj}^{(t-1)}\mid S_{1j}^{(t)}, \ldots, S_{k-1,j}^{(t)}, S_{k+1,j}^{(t-1)}, \ldots, S_{Kj}^{(t-1)}, \tilde{\bfX}_n )q(S_{kj}^{new}\mid S_{kj}^{(t-1)})}\Bigg\} \\
		&=& \min \Bigg\{1, \frac{\pi_\alpha(S_{kj}^{new}\mid  {\bfX}_{n_k}) f(S_{1j}^{(t)}, \ldots, S_{k-1,j}^{(t)}, S_{kj}^{new}, S_{k+1,j}^{(t-1)}, \ldots, S_{Kj}^{(t-1)}) q(S_{kj}^{(j-1)}\mid S_{kj}^{new})}{\pi_\alpha(S_{kj}^{(t-1)}\mid {\bfX}_{n_k} ) f(S_{1j}^{(t)}, \ldots, S_{k-1,j}^{(t)}, S_{kj}^{(t-1)},  S_{k+1,j}^{(t-1)}, \ldots, S_{Kj}^{(t-1)} ) q(S_{kj}^{new}\mid S_{kj}^{(t-1)})}\Bigg\} ,
		\eea
		otherwise set $S_{kj}^{(t)} = S_{kj}^{(t-1)}$.
	\end{enumerate}
\end{enumerate}
The kernel $q(S^{new} \mid S)$ is chosen to form a new set $S^{new}$ by changing a randomly selected nonzero component to 0 with probability 0.5 or by changing a randomly selected zero component to 1 with probability 0.5. Steps 1 and 2 in the above algorithm, can be parallelized for each column. For more details, we refer the interested readers to \cite{cao2019posterior} and \cite{lee2019minimax}.

The tuning parameters are chosen as suggested in \cite{martin2017empirical} and \cite{lee2019minimax}. 
Specifically, we set $\alpha = 0.999$ to mimic the Bayesian model with the original likelihood. 
In practice, as long as $1 - \alpha$ is close to zero, the performance was not sensitive to the choice of $\alpha$. 
The other hyperparameters were chosen as $\gamma = 0.1$, $\nu_0 = 0$, $c_1 = 2$ and $c_{2j} = \{p(K - 1)\}^{-1}$ for $j = 2, \ldots, p$ to satisfy the theoretical conditions.
The above algorithm is coded in \verb|R| and publicly available at \verb|https://github.com/xuan-cao/Multiple-DAG-Selection|.

\subsection{Simulation setting} \label{subsec:simsetting}
In this section, we demonstrate the performance of the proposed method in various settings similar to those used in \cite{liu2019jointDAG,peterson2015bayesian,peterson2020joint}. We construct three Cholesky factors $A_1$, $A_2$, and $A_3$ corresponding to DAGs $\mathcal D_1$, $\mathcal D_2$ and $\mathcal D_3$ with different degrees of shared structure. We include $p = 150$ nodes, and consider the first scenario as follows. For the first $p \times p$ lower triangular matrix $A_1$, we randomly chose 2\% of the lower triangular entries of $A_1$ and sampled their values from a uniform distribution on $[-0.7, -0.3] \cup [0.3, 0.7]$. The remaining entries were set to zero. $\mathcal D_1$ can be acquired by mapping the nonzero entries in $A_1$ to a DAG with $p$ nodes. To obtain $\mathcal D_2$, five edges are removed from $\mathcal D_1$ and five new edges added at random. To obtain $\mathcal D_3$, five edges are removed from the graph for group 2, and five edges added at random. All the lower triangular entries in $A_2$ and $A_3$ are generated in a similar manner as in $A_1$. We call this simulation setting Scenario 1 ({\it high overlapping}), where each pair of DAGs have 218 of 223 edges (97.76\%) in common.

Next, we investigate a different simulation scenario, say Scenario 2 ({\it medium overlapping}), where $A_1, \mathcal D_1, A_2, \mathcal D_2$ are formed as in Scenario 1, but we change the design of $A_3$ and $\mathcal D_3$ as follows. To obtain $\mathcal D_3$, 20 edges are removed from the graph for group 2, and 20 edges added at random. All the entries in three Cholesky factors $A_1$, $A_2$, and $A_3$ are generated as in Scenario 1. Under this setting, $\mathcal D_1$ and $\mathcal D_2$ share 218 of 223 edges (97.76\%), $\mathcal D_2$ and $\mathcal D_3$ share 203 edges (91.03\%), and $\mathcal D_1$ and $\mathcal D_3$ share around 219 edges (89.24\%). For our final simulation setting, Scenario 3 ({\it low overlapping}), we first create $A_1$ and $\mathcal D_1$ as previously mentioned, and obtain $\mathcal D_2$ by randomly removing 20 edges and adding 20 edges from $\mathcal D_1$. $\mathcal D_3$ is again acquired by randomly removing 20 edges and adding 20 edges from $\mathcal D_2$. These steps result in DAGs $\mathcal D_1$ and $\mathcal D_2$ that share 203 of 223 edges (91.03\%), $\mathcal D_2$ and $\mathcal D_3$ that share 203 edges (91.03\%), and $\mathcal D_1$ and $\mathcal D_3$ that have 185 common edges (82.96\%). All the nonzero entries in $A_1$, $A_2$, and $A_3$ are then sampled from a uniform distribution as elaborated in Scenario 1. For all settings, we simulate the diagonal entries of $D_1, D_2, D_3$ from a uniform distribution on $[2,5]$. Given the precision matrices $\Omega_k = (I_p - A_k)^T D_k^{-1} (I_p - A_k)$ for $k = 1, 2, 3$, the data sets were generated from the multivariate normal distribution $N_p(0,\Omega_k^{-1})$ with $(n_k, p) = (100, 150)$ for $k = 1, 2, 3$.

\subsection{Performance comparison}  \label{subsec:simcomparison}
We compare the following methods: the proposed JESC prior, Bayesian inference based on ESC applied separately for each group (SESC) \citep{lee2019minimax}, multiple PenPC (MPenPC) \citep{liu2019jointDAG}, joint graphical lasso (JGL) \citep{danaher2014joint}, and seperate DAG lasso (DAGL) for each group \citep{shojaie2010penalized}. The tuning parameters in JGL were selected using a grid search to identify the combination that minimizes the AIC as suggested in \cite{danaher2014joint}. Since for our simulation studies, JGL could not produce exact zeros in the Cholesky factors of the estimated precision matrices, we further adopt the hard thresholding of these Cholesky factors. The penalty parameters in MPenPC were tuned using the  extended BIC (EBIC) \citep{chen2008} as suggested in \cite{liu2019jointDAG}. The penalty parameters in DAGL were set as $\lambda_i(\alpha) = 2 n^{- 1/2}Z^*_{{0.1}/\{2p(i-1)\}}$ (separate for each variable $i$), where $Z_q^*$ denotes the $(1-q)^{th}$ quantile of the standard normal distribution. This choice is justified in \cite{shojaie2010penalized} based on asymptotic considerations. For Bayesian methods, we ran the Metropolis-Hastings algorithm specified in Section \ref{subsec:posterior} for each data set to conduct posterior inferences. Every MCMC chain started from an empty initial state and ran for 5,000 iterations with a burn-in period of 1,000, since we observed that on average the posterior samples converged rapidly and stabilized after 1,000 iterations. The hyperparameter $c_2$ was set to 0 when implementing SESC. We constructed the final model by collecting indices with inclusion probabilities exceeding 0.5.

To evaluate the performance of joint DAG selection, the true positive rate (TPR), false positive rate (FPR), Matthews correlation coefficient (MCC), and area under the curve (AUC) are reported at Tables \ref{table:comp1}, \ref{table:comp2} and \ref{table:comp3} averaged over 20 repetitions.
The criteria are defined as
\bea
\text{TPR}  &=&     \frac{TP}{TP+FN} ,   \\
\text{FPR}  &=&    \frac{FP}{TN+FP}   ,  \\
\text{MCC}  &=&     \frac{TP \times TN - FP\times FN}{\sqrt{(TP+FP)(TP+FN)(TN+FP)(TN+FN)}}  ,	 
\eea
where TP, TN, FP and FN are true positive, true negative, false positive and false negative, respectively.
The AUC is calculated based on the TPR and the FPR for Bayesian methods with varying thresholds. 
The AUCs for the regularization methods are omitted.
\begin{table}[!tb]
	\centering\scriptsize
	\caption{Performance summary for Scenario 1 (high overlapping). Comparison of true positive rate (TPR), false positive rate (FPR), Matthews correlation coefficient (MCC) and area under the ROC curve (AUC). The models compared are the Bayesian joint ESC method proposed in this paper (JESC) \citep{lee2019minimax}, 
		separate ESC method applied for individual group (SESC), multiple PenPC (MPenPC) \citep{liu2019jointDAG}, and joint graphical lasso (JGL) \citep{danaher2014joint}.}
	\scalebox{1}{
		\begin{tabular}{ccccccc}
			\hline
			& Measure & JESC   & SESC   & MPenPC & JGL    & DAGL   \\\hline
			Group 1         & TPR     & 0.8879 & 0.8610  & 0.8924 & 0.9148 & 0.3785 \\
			& FPR     & 0.0045 & 0.0048 & 0.0232 & 0.0365 & 0      \\
			& MCC     & 0.8403 & 0.8193 & 0.6163 & 0.5432 &0.6113  \\
			& AUC     & 0.9761 & 0.9684 & $\cdot$       &  $\cdot$      & $\cdot$       \\ \hline
			Group 2         & TPR     & 0.9148 & 0.9072 & 0.9462 & 0.9372 &0.3668  \\
			& FPR     & 0.0039 & 0.0044 & 0.0211 & 0.0326 & 0      \\
			& MCC     & 0.8664 & 0.8369 & 0.6638 & 0.5769 &0.6009  \\
			& AUC     & 0.9962 & 0.9780  & $\cdot$       &  $\cdot$      & $\cdot$        \\ \hline
			Group 3         & TPR     & 0.8969 & 0.8654 & 0.8879 & 0.8789 &0.3552 \\
			& FPR     & 0.0038 & 0.0041  & 0.0230  & 0.0369 & 0      \\
			& MCC     & 0.8580  & 0.8361 & 0.6152 & 0.5224 &0.5901  \\
			& AUC     & 0.9835 & 0.9805 & $\cdot$       &  $\cdot$      & $\cdot$        \\ \hline
			All edges       & TPR     & 0.8999 & 0.8775 & 0.9088 & 0.9103 &0.3669  \\
			& FPR     & 0.0040  & 0.0044 & 0.0224 & 0.0353 & 0      \\
			& MCC     & 0.8549 & 0.8308 & 0.6317 & 0.5471 &0.6010  \\
			& AUC     & 0.9479 & 0.9289 & $\cdot$       &  $\cdot$      & $\cdot$   \\ \hline
			Differential edges & TPR     & 1      & 0.9050    & 1      & 1      &0.4600    \\
			& FPR     & 0      & 0      & 0      & 0      & 0      \\
			& MCC     & 1      & 0.9098 & 1      & 1      &0.5463     \\
			& AUC     & 1      & 0.9525   & $\cdot$       &  $\cdot$      & $\cdot$      \\\hline
		\end{tabular} \label{table:comp1}
	}
\end{table}

\begin{table}[!tb]
	\centering\scriptsize
	\caption{Performance summary for Scenario 2 (medium overlapping). 
	}
	\scalebox{1}{
		\begin{tabular}{ccccccc}
			\hline
			& Measure & JESC   & SESC   & MPenPC & JGL    & DAGL   \\ \hline
			Group 1         & TPR     & 0.8704   & 0.8475 & 0.9058 & 0.9193 & 0.3565 \\
			& FPR     & 0.0051  & 0.0049 & 0.0227 & 0.0363 & 0    \\
			& MCC     & 0.8195 & 0.8096 & 0.6275 & 0.5465 & 0.5908 \\
			& AUC     & 0.9810  & 0.9836 & $\cdot$       &  $\cdot$      & $\cdot$       \\ \hline
			Group 2         & TPR     & 0.9238 & 0.8654 & 0.9238 & 0.9372 & 0.3632 \\
			& FPR     & 0.0030  & 0.0043 & 0.0216 & 0.0330  & 0      \\
			& MCC     & 0.8901 & 0.8307 & 0.6466 & 0.5748 & 0.5963 \\
			& AUC     & 0.9896 & 0.9885 & $\cdot$       &  $\cdot$      & $\cdot$  \\ \hline
			Group 3         & TPR     & 0.8610  & 0.8834 & 0.8924 & 0.8879 & 0.3529 \\
			& FPR     & 0.0040  & 0.0046 & 0.0232 & 0.0368 & 0 \\
			& MCC     & 0.8335 & 0.8359 & 0.6163 & 0.5276 & 0.5897  \\
			& AUC     & 0.9859 & 0.9853 & $\cdot$       &  $\cdot$      & $\cdot$     \\ \hline
			All edges       & TPR     & 0.8849 & 0.8654 & 0.9073 & 0.9148 & 0.3584 \\
			& FPR     & 0.0041  & 0.0046 & 0.0225 & 0.0354 & 0      \\
			& MCC     & 0.8475 & 0.8254 & 0.6301 & 0.5484 & 0.5930 \\
			& AUC     & 0.9405 & 0.9304 & $\cdot$       &  $\cdot$      & $\cdot$    \\ \hline
			Differential edges & TPR     & 0.8920   & 0.8482   & 0.9241   & 0.9190   & 0.3800    \\
			& FPR     & 0      & 0      & 0.0381   & 0.0814   & 0      \\
			& MCC     & 0.8978 & 0.8584 & 0.8867 & 0.8423   & 0.4835    \\
			& AUC     & 0.9461   & 0.9235   & $\cdot$       &  $\cdot$      & $\cdot$    \\\hline
		\end{tabular} \label{table:comp2}
	}
\end{table}

\begin{table}[!tb]
	\centering\scriptsize
	\caption{Performance summary for Scenario 3 (low overlapping). 
	}
	\scalebox{1}{
		\begin{tabular}{ccccccc}
			\hline
			& Measure & JESC   & SESC   & MPenPC & JGL    & DAGL   \\ \hline
			Group 1         & TPR     & 0.8879 & 0.8520  & 0.8924 & 0.9193 &0.3796  \\
			& FPR     & 0.0042 & 0.0048 & 0.0226 & 0.0360  & 0      \\
			& MCC     & 0.8456 & 0.8140  & 0.6207 & 0.5484 &0.6067  \\
			& AUC     & 0.9866 & 0.9786 & $\cdot$       &  $\cdot$      & $\cdot$      \\ \hline
			Group 2         & TPR     & 0.8969 & 0.8565 & 0.9148 & 0.9193 &0.3330  \\
			& FPR     & 0.0041 & 0.0040  & 0.0236 & 0.0372 & 0      \\
			& MCC     & 0.8526 & 0.8327 & 0.6254 & 0.5422 &0.5705  \\
			& AUC     & 0.9829 & 0.9785 & $\cdot$       &  $\cdot$      & $\cdot$     \\ \hline
			Group 3         & TPR     & 0.8879 & 0.9103 & 0.9148 & 0.9148 &0.3643  \\
			& FPR     & 0.0044 & 0.0047 & 0.0255 & 0.0365 &0  \\
			& MCC     & 0.8403 & 0.8497 & 0.6116 & 0.5432 &0.5967  \\
			& AUC     & 0.9869 & 0.985  & $\cdot$       &  $\cdot$      & $\cdot$     \\ \hline
			All edges       & TPR     & 0.8909 & 0.8729 & 0.9073 & 0.9178 &0.3576  \\
			& FPR     & 0.0042 & 0.0045 & 0.0239 & 0.0366 & 0      \\
			& MCC     & 0.8462 & 0.8321 & 0.6191 & 0.5446 &0.5916  \\
			& AUC     & 0.9433 & 0.9342 & $\cdot$       &  $\cdot$      & $\cdot$     \\ \hline
			Differential edges & TPR     & 0.8530   & 0.8105    & 0.9255  &0.8940   &0.3375   \\
			& FPR     & 0      & 0      & 0.0518   &0.0905  & 0      \\
			& MCC     & 0.8617 & 0.8256 & 0.8753 &0.8392  &0.4459  \\
			& AUC     & 0.9251  & 0.9022   & $\cdot$       &  $\cdot$      & $\cdot$        \\\hline
		\end{tabular} \label{table:comp3}
	}
\end{table}

Based on the simulation results (Tables \ref{table:comp1} to \ref{table:comp3}), we can tell that the proposed method is more conservative in the identification of differential edges compared with frenquentist approaches, as indicated by its lower sensitivity and FPR. The high FPR of the penalized likelihood based methods in selecting differential edges is partly due to the fact that they select a larger number of false positive edges overall and may be because the regularization methods based on cross-validation tend to include many redundant variables resulting in a relatively larger number of  errors compared with those for the Bayesian methods \citep{peterson2020joint}. 

The proposed method achieves the highest MCC in identifying all edges across methods compared and yields a higher AUC compared with the separate inference, especially in the high and medium overlapping scenarios. As indicated in our theoretical results, the estimation performance based on the joint inference benefits the most when all graphs share the common support. 
Figure \ref{fig_1} shows the unnormalized posterior scores in log scale.
Based on Figure \ref{fig_1}, it seems that, in the high and medium overlapping settings, not only does JESC outperform SESC but also the posterior probabilities based on JESC increase faster than SESC during the beginning of the MCMC procedure. 
\begin{figure}[!tb]
	\begin{subfigure}[t]{.3\textwidth}
		\centering
		\includegraphics[width=1.0\linewidth]{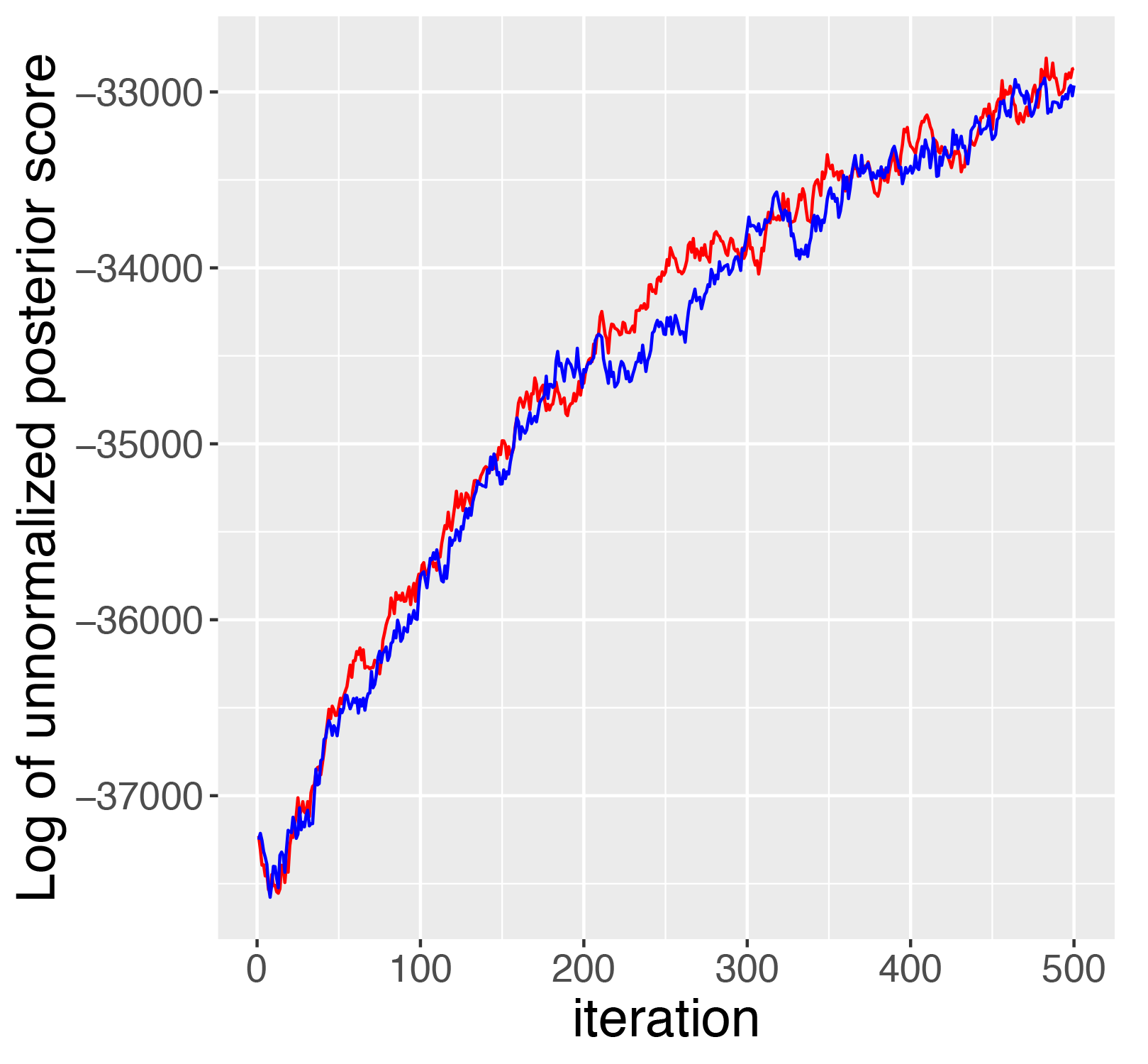}
		\caption{High overlapping}
	\end{subfigure}
	\,\,
	\begin{subfigure}[t]{.3\textwidth}
		\centering
		\includegraphics[width=1.0\linewidth]{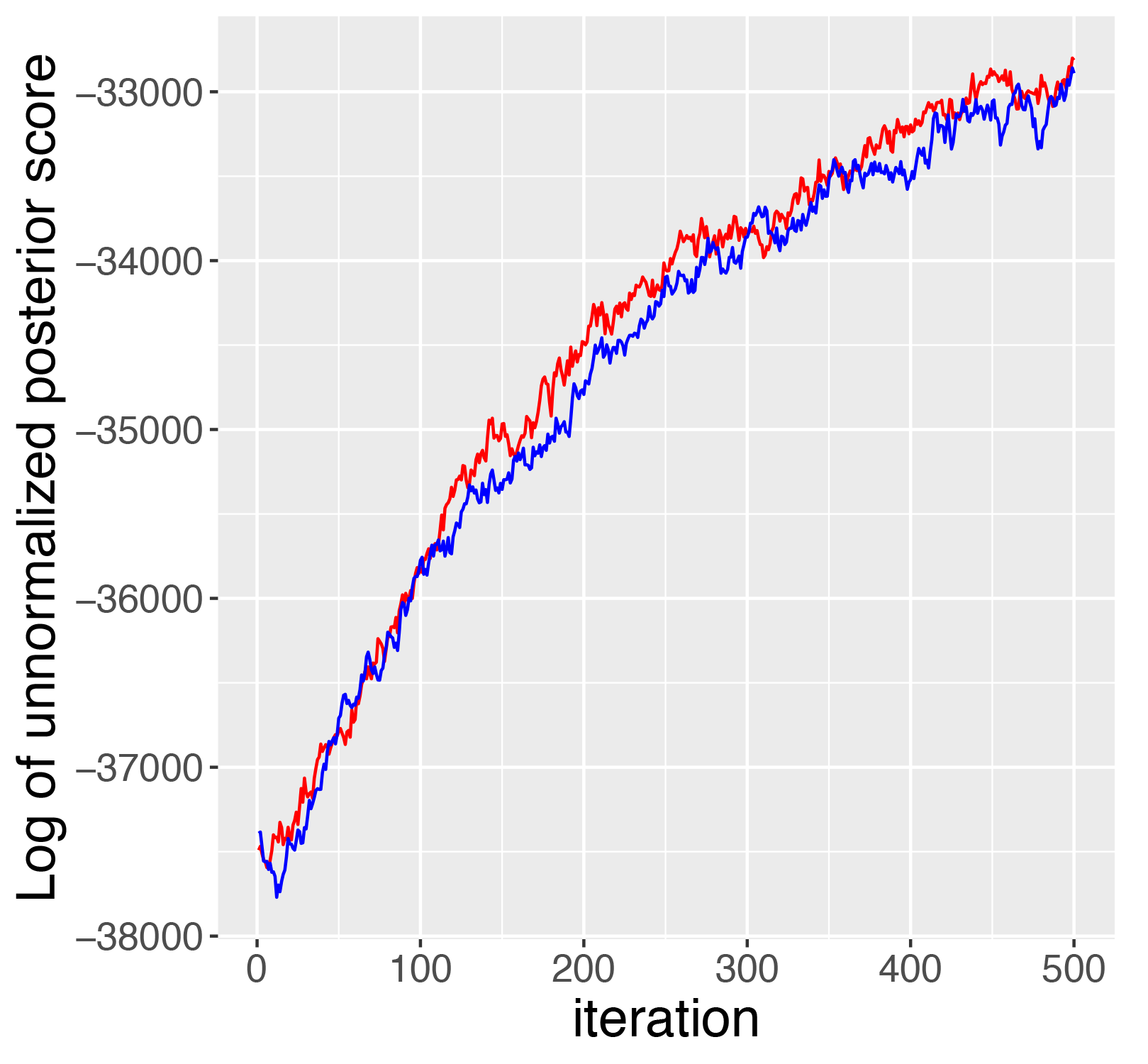}
		\caption{Medium overlapping}
	\end{subfigure}
	\,\,
	\begin{subfigure}[t]{.3\textwidth}
		\centering
		\includegraphics[width=1.21\linewidth]{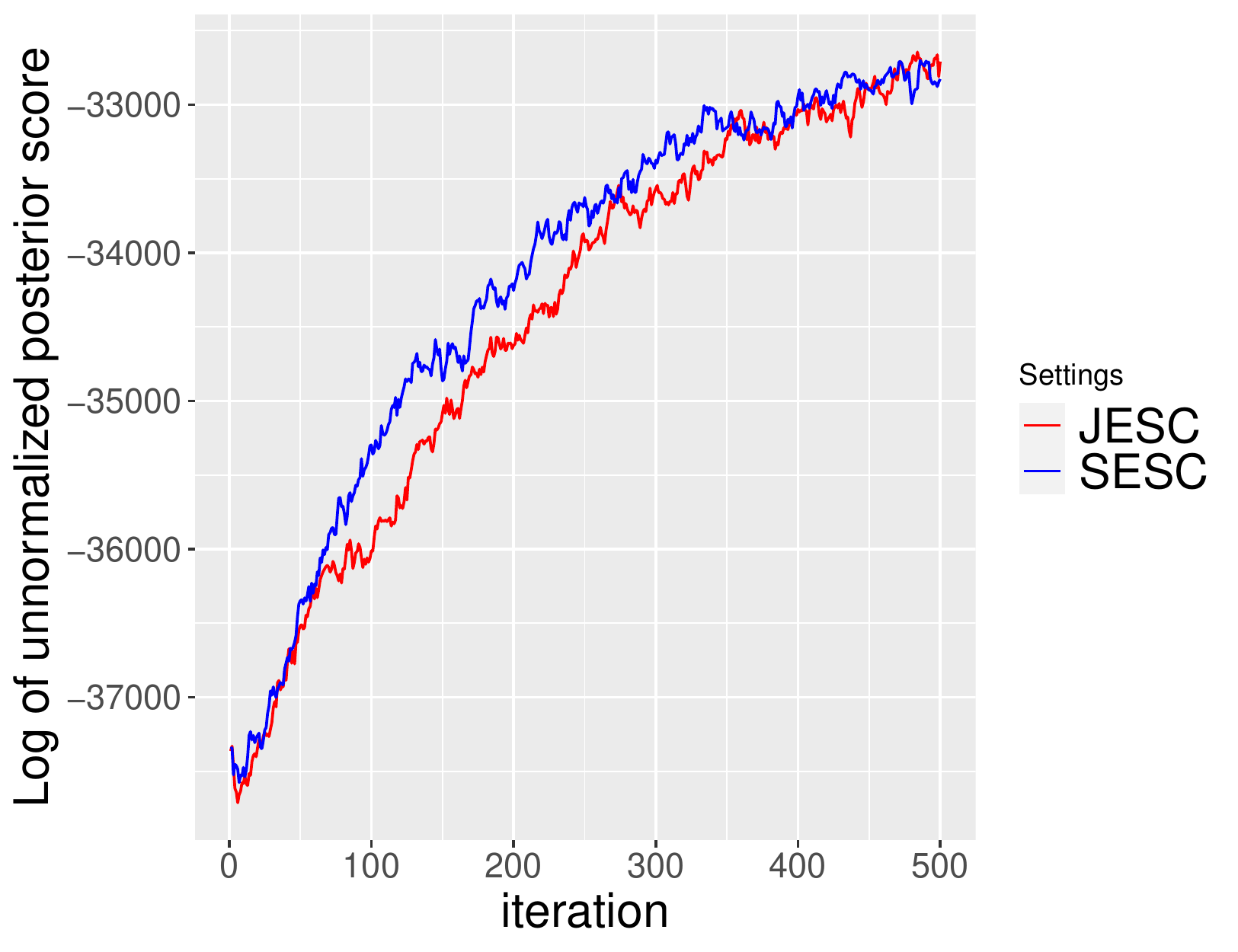}
		\caption{Low overlapping}
	\end{subfigure}
	\caption{The log of unnormalized posterior scores during the first 500 iterations under different scenarios.}
	\label{fig_1}
\end{figure}

\section{Inferring Brain Functional Networks} \label{sec:real}
In this section, we continue the illustration of JESC by applying the proposed method to an fMRI data set for simultaneously inferring multiple brain functional networks.
Parkinson's disease (PD) is a major neurodegenerative disease influenced by both genetic and environmental factors \citep{Halliday:2014}. As the second most common neurodegenerative disorder, PD is characterized by the degeneration of dopamine-producing cells in the brain resulting in motor symptoms and nonmotor features \citep{Mhyre:2012}. Depression is the most common psychiatric symptom in patients with PD, and one of the earliest prodromal comorbidities that can have a significant impact on the quality of life \citep{chagas2013}. Nonmotor features including depression can appear in the earliest phase of the disease even before clinical motor impairment \citep{Lix2010,Shearer2012, Tibar2018}, but the efficacy of medications and psychotherapies for treating depression in PD (DPD) patients remains limited \citep{Abos2017}. Hence, advances in timely detection and concerted management of DPD becomes urgent.  

Up until now, the neural and pathophysiologic mechanisms of DPD remain unclear and are key research priorities for neurologists. A variety of neuroimaging technologies including fMRI, structure MRI, positron emission tomography and electroencephalography have been adopted to study PD. Among these, neuroimaging indicators have achieved considerable progress, and have provided new insights into PD. Resting-state fMRI exploits blood oxygen level-dependent signal to assess the correlation of the networks in different brain areas. An intra- and inter-network functional connectivity study in DPD demonstrated abnormal functional connection in left frontoparietal network, basal ganglia network, salience network and default-mode network \citep{wei2017aberrant}. To understand the underlying functional network changes for both DPD and non-depressed PD (NDPD) patients so that physicians could get an early-diagnosis in time for available treatment, we apply the proposed method to an fMRI data set \citep{wei2017aberrant} for identifying regions of interest that are associated with the aberrant functional network and relevant to the onset of DPD and NDPD.

Twenty-one DPD patients, 49 NDPD patients and 50 matched healthy controls (HC) were recruited. Image data were acquired using a Siemens 3.0-Tesla signal scanner and functional imaging data were collected transversely by using a gradient-recalled echo-planar imaging (GRE-EPI) pulse sequence. We further perform image preprocessing procedure using Data Processing Assistant for Resting-State fMRI (\url{http://rfmri.org/DPARSF}) based on Statistical Parametric Mapping (SPM12, \url{http:// www.fil.ion.ucl.ac.uk/spm/}) operated on the Matlab platform. \cite{Zang:Jiang} proposed the method of Regional Homogeneity (ReHo) to analyze characteristics of regional brain activity and to reflect the temporal homogeneity of neural activity. In particular, we focus on the mReHo maps obtained by dividing the mean ReHo of the whole brain within each voxel in the ReHo map. We further segment the mReHo maps based on the Harvard-Oxford atlas (HOA) and extract all the mReHo signals corresponding to 15 subcortical regions of interest (ROI) (HOA number: 97-112) using the Resting-State fMRI Data Analysis Toolkit. Hence, adapted to our setting, $n_1 = 21$, $n_2 = 49$, $n_3 = 50$, $p = 15$, and the ordering is taken according to the HOA number.

We apply JESC along with other contenders to the resulting mReHo data set consisting of three groups for jointly estimating the functional connectivity networks. The parameter configuration are identical to those in the simulation study. Table \ref{table:real1} lists the number of edges selected by JESC and its competitors. The separate estimation methods (SESC and DAGL) resulted in graphs that share fewer edges in the Cholesky factors for the precision matrices of three groups. JGL resulted in most shared edges, followed by MPenPC and our method (JESC). Overall, JGL and MPenPC selected a lot more linked genes than other methods. JESC and SESC selected less unique edges among the ROIs for DPD than those for NDPD and HC. This might suggest that the patients with DPD lack some important links among the subcortical regions. 
By visualizing the brain connectome as nodes and edges, Figure \ref{fig_2} shows the DAGs for three groups and all the shared edges estimated by JESC. 

Table \ref{table:real2} lists six edges that are unique to the group of DPD identified by JESC. In particular, we discover discriminative connectivity changes between hippocampus and amygdala areas. These findings suggest disease-related alterations of functional connectivity as the basis for faulty information processing in DPD. Our findings are in good agreement with the aberrant functional features in subcortical regions that are related to the onset of DPD as shown in previous studies  \citep{Dan2017,Lin2020,Cao2020}. 

\begin{figure} [htbp]
	\begin{subfigure}[t]{.23\textwidth}
		\centering
		\includegraphics[width=1.0\linewidth]{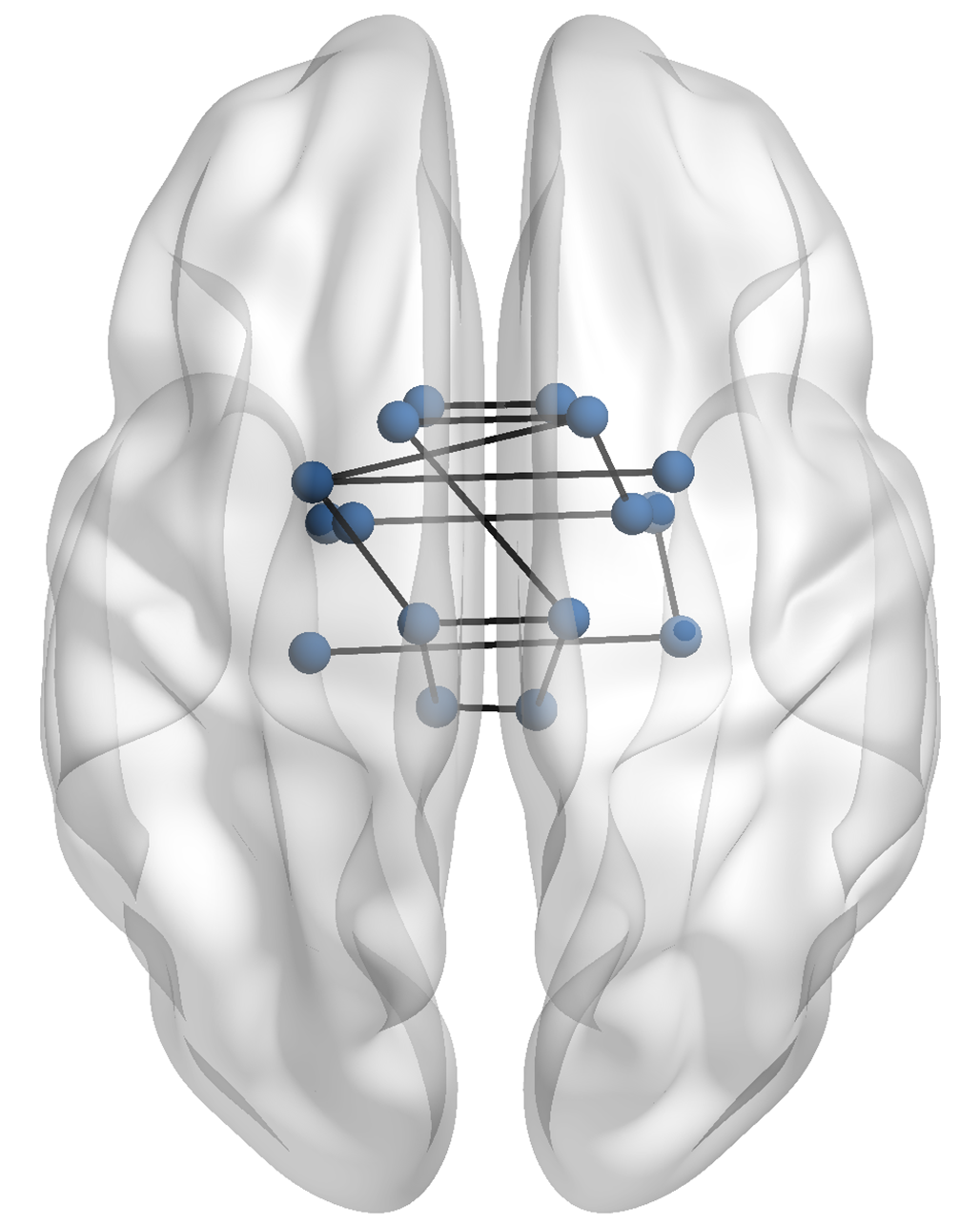}
		\caption{Shared edges}
	\end{subfigure}
	\,\,
	\begin{subfigure}[t]{.23\textwidth}
		\centering
		\includegraphics[width=1.0\linewidth]{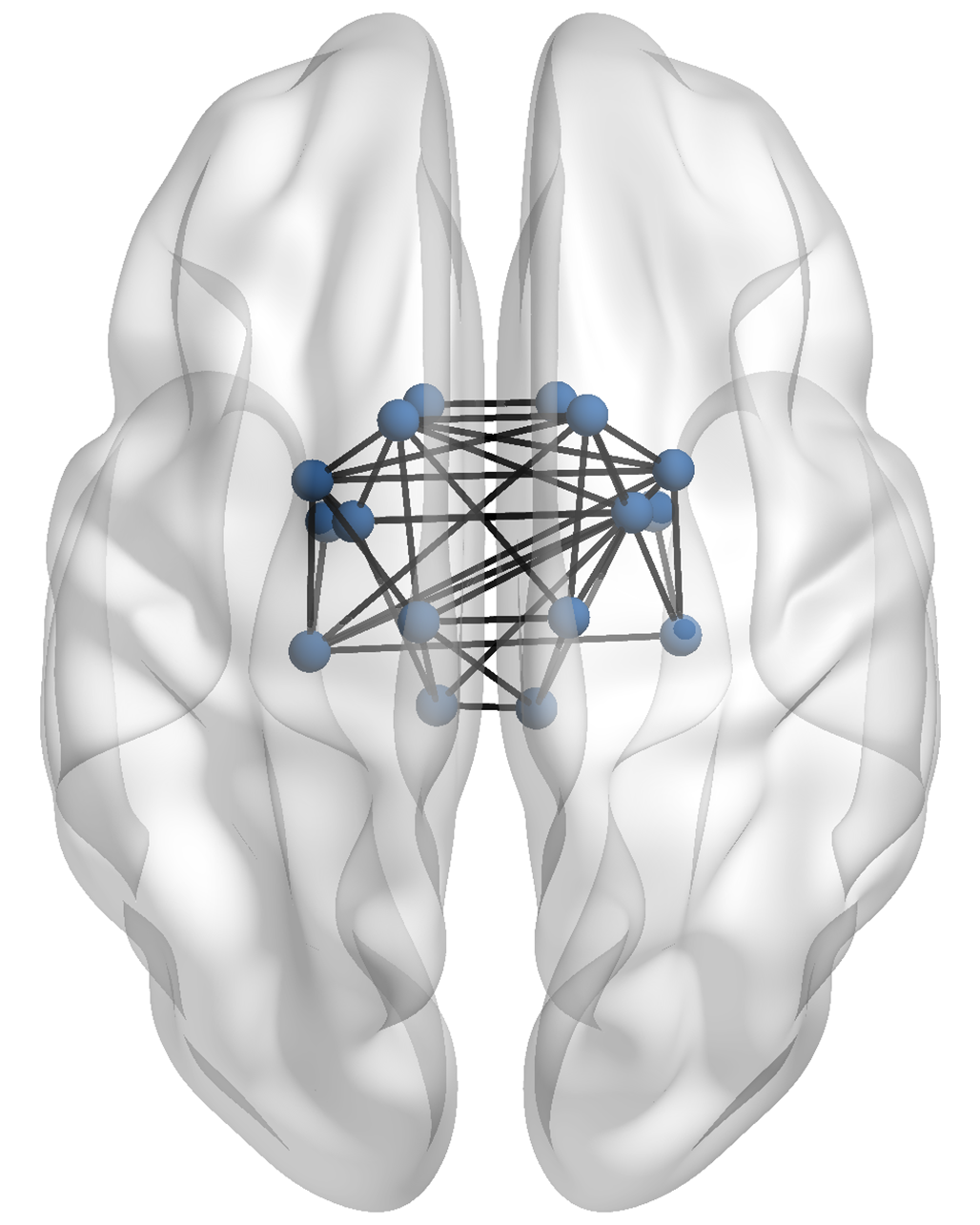}
		\caption{HC}
	\end{subfigure}
	\,\,
	\begin{subfigure}[t]{.23\textwidth}
		\centering
		\includegraphics[width=1.0\linewidth]{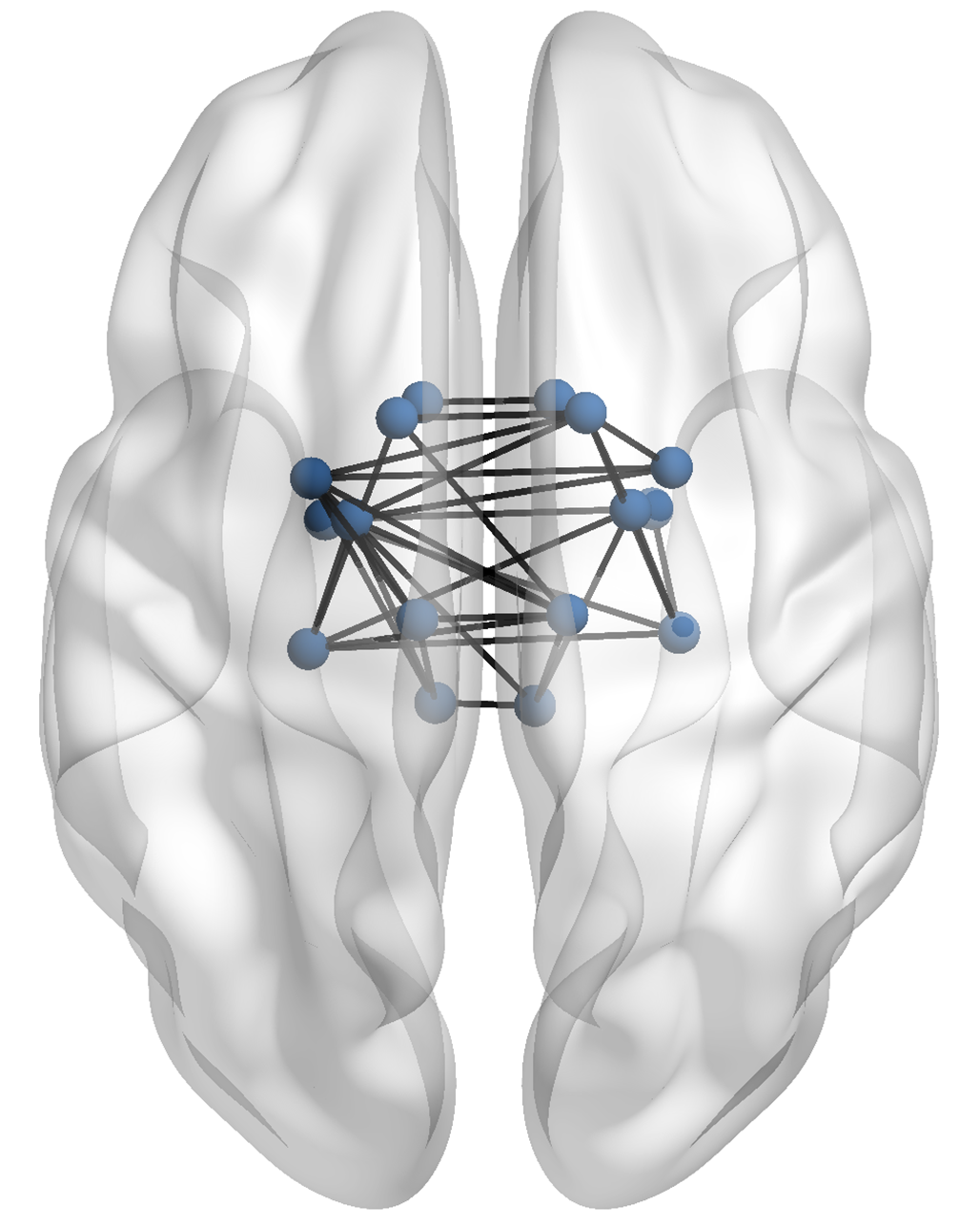}
		\caption{DPD}
	\end{subfigure}
	\,\,
	\begin{subfigure}[t]{.23\textwidth}
		\centering
		\includegraphics[width=1.0\linewidth]{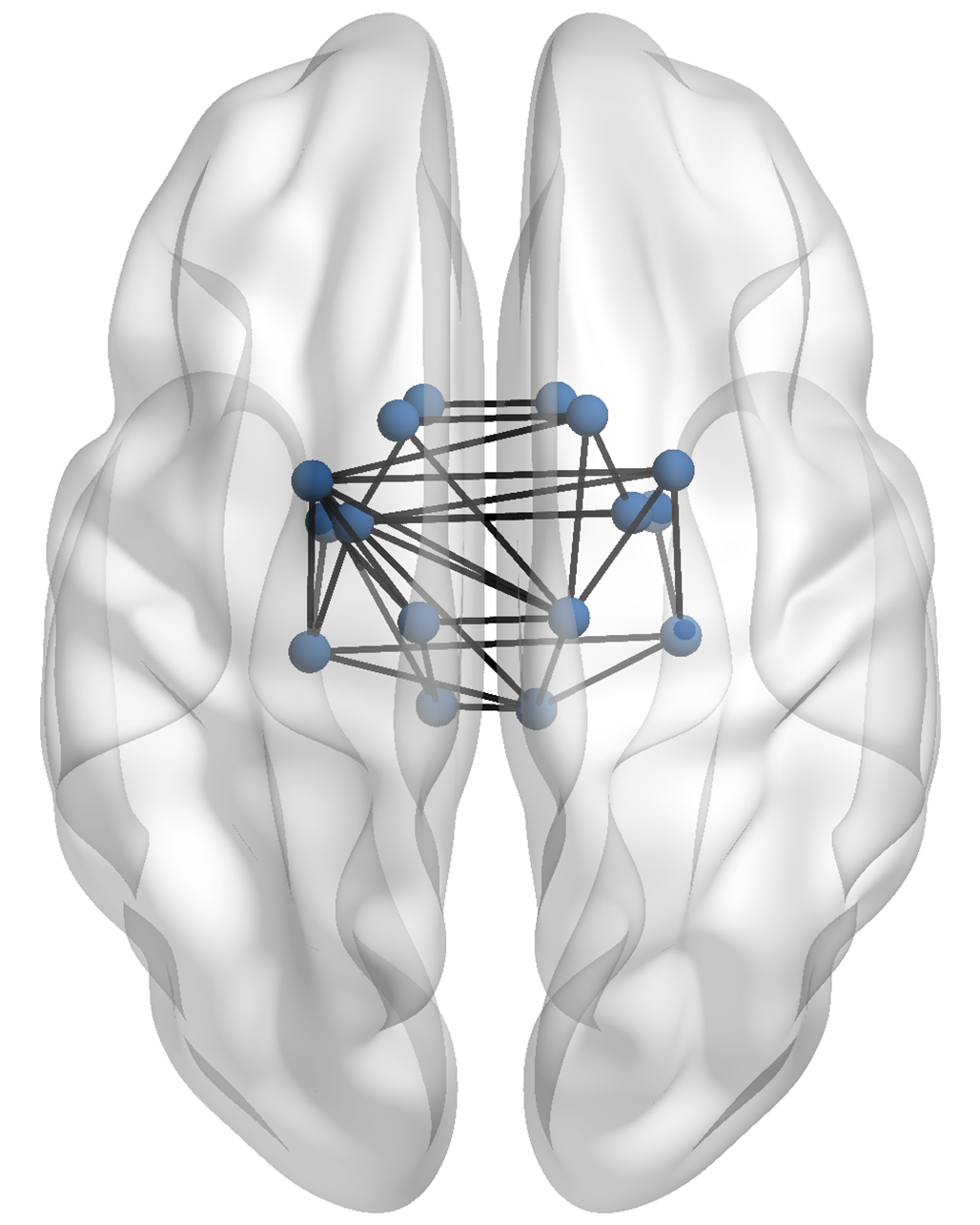}
		\caption{NDPD}
	\end{subfigure}
	\caption{Estimated brain function activity networks for HC, DPD, NDPD and the shared connections among three groups.}
	\label{fig_2}
\end{figure}

\begin{table}[htbp]
	\centering\scriptsize
	\caption{Number of edges selected by the proposed method and its competitors. “DPD unique” counts the number of edges that only appear in the DPD group; “NDPD unique” counts the number of edges that only appear in the NDPD group; “HC unique” counts the number of edges that only appear in the HC group; and “Shared” counts the number of edges shared by all three groups.}
	\begin{tabular}{ccccc}
		\hline
		Method & DPD unique & NDPD unique & HC unique & Shared \\ \hline
		JESC   & 6          & 8           & 12        & 14     \\
		SESC   & 8          & 9           & 11        & 10     \\
		MPenPC & 10         & 5           & 8         & 19     \\
		JGL    & 9          & 3           & 9         & 27     \\
		DAGL   & 5          & 3           & 5         & 5     \\ \hline
	\end{tabular} \label{table:real1}
\end{table}

\begin{table}[htbp]
	\centering\scriptsize
	\caption{Estimated edges that are unique to the group of DPD and the related brain regions indexed in the HOA template. }
	\begin{tabular}{ccccc}
		\hline
		ID & HOA number & Brain region A    & HOA number & Brain region B    \\\hline
		1  & 105        & Left Pallidum     & 99        & Left Thalamus     \\
		2  & 105        & Left Pallidum     & 102        & Right Caudate     \\
		3  & 107        & Left Hippocampus  & 100        & Right Thalamus    \\
		4 & 107        & Left Hippocampus & 101        & Left Caudate      \\
		5 & 108        & Right Hippocampus & 106        & Right Pallidum    \\
		6 & 109        & Left Amygdala     & 108        & Right Hippocampus \\
		\hline
	\end{tabular} \label{table:real2}
\end{table}

	\section{Discussion}\label{sec:disc}
	In this paper, we proposed the JESC prior for Bayesian joint inference of multiple DAGs.
	In high-dimensional settings, the induced posterior attains the joint selection consistency under mild conditions.
	We also showed the advantage of the joint inference over separate inferences, in terms of requiring weaker beta-min conditions, when the DAGs share the common structure.
	The proposed joint inference outperforms other state-of-the-art methods in numerical studies based on simulated data sets. We also applied our method to an fMRI data set, where our results are consistent with previous neurological findings.
	
	Throughout the paper, we focus on the MRF prior to encourage similar structures across all DAGs.
	The other choice of prior can be imposed that depends on the relationship between graphs.
	For example, if there is a natural ordering between $K$ classes so that it is expected that the DAGs were generated based on a Markov chain, one can use a prior,
	\bea
	f( S_{1j}, \ldots, S_{Kj} )  &=& f(S_{1j}) \prod_{k=2}^K \pi(S_{kj} \mid S_{k-1, j}) \\
	&\propto&  \prod_{k=2}^K \exp \Big\{  2 c_2 \sum_{l=1}^{j-1} I( S_{k, jl} = S_{k-1,jl} = 1 )   \Big\}   ,\quad  j=2,\ldots, p 
	\eea
	for some constant $c_2>0$, which encourages similar patterns of sparsity for two consecutive graphs $S_{kj}$ and $S_{k-1,j}$.
	Theoretical properties of the joint inference based on various types of joint priors for $(S_{1j},\ldots, S_{Kj})$, including the above Markov-type prior, may worth investigating as future work.

	\newpage
	\section{Proofs}\label{sec:proof}

	\begin{proof}[Proof of Theorem \ref{thm:selection}]
		Let $S_j = (S_{1j},\ldots, S_{Kj})$ and $S_{0j} = (S_{01,j},\ldots, S_{0K,j})$.
		It suffices to show that
		\bea
		\bbE_0  \Big\{  \pi_\alpha \Big( S_{j} \neq S_{0j } \mid \tilde{\bfX}_n    \Big)   \Big\}  &=& o ( p^{-1}) 
		\eea
		for any $j=2,\ldots ,p$, because
		\bea
		1 - \bbE_0 \Big\{  \pi_\alpha \Big( S_{A_1} = S_{A_{01} }, \ldots, S_{A_K} = S_{A_{0K}}  \mid \tilde{\bfX}_n    \Big)   \Big\} 
		&\le& \sum_{j=2}^p  \bbE_0  \Big\{  \pi_\alpha \Big( S_{j} \neq S_{0j } \mid \tilde{\bfX}_n    \Big)   \Big\}  .
		\eea		
		Note that $\{S_j \neq S_{0j} \}$ is equivalent to $\{ S_{kj} \neq S_{0k,j} \text{ for at least one } k =1,\ldots,K \}$.
		For given $1\le l \le K$ and $1\le k_1 <\ldots < k_l  \le K$, define 
		\bea
		N_{k_1,\ldots, k_l}  &:=&  \Big\{ S_j :    S_{k j} \neq S_{0k,j} \text{ if and only if }  k \in \{  k_1,\ldots, k_l\}   \Big\}   .
		\eea
		Then, we have
		\bean
		&& \pi_\alpha \big( S_{j} \neq S_{0j } \mid \tilde{\bfX}_n    \big) \nonumber  \\
		&=& \sum_{k=1}^K  \sum_{ S_j \in N_k }    \pi_\alpha \big( S_{j} \mid \tilde{\bfX}_n    \big)   
		+ \sum_{k_1 < k_2}  \sum_{ S_j \in N_{k_1, k_2} }     \pi_\alpha \big( S_{j} \mid \tilde{\bfX}_n    \big) + \sum_{k_1 < k_2 <k_3}  \sum_{ S_j \in N_{k_1, k_2, k_3} }     \pi_\alpha \big( S_{j} \mid \tilde{\bfX}_n    \big)  \nonumber \\
		&+& \cdots \,\, + \sum_{S_j \in N_{1,\ldots, K} }     \pi_\alpha \big( S_{j} \mid \tilde{\bfX}_n    \big)   . \label{post_neq}
		\eean
		
		The first term in \eqref{post_neq} can be divided into two parts:
		\bea
		&& \sum_{k=1}^K  \bbE_0 \big\{ \pi_\alpha (S_j \in N_k \mid \tilde{\bfX}_n)  \big\}  \\ 
		&=& \sum_{k=1}^K \Big[  \bbE_0 \big\{ \pi_\alpha (S_j \in N_k, S_{kj} \supsetneq S_{0k,j} \mid \tilde{\bfX}_n)  \big\} + \bbE_0 \big\{ \pi_\alpha (S_j \in N_k , S_{kj} \nsupseteq S_{0k,j}  \mid \tilde{\bfX}_n)  \big\} \Big] .
		\eea
		Let $\pi^I( S_{kj } \mid \bfX_{n_k}) \propto f_\alpha (\bfX_{n_k} \mid S_{kj}) \pi(S_{kj}) $ be the posterior for $S_{kj}$ based on the separate inference for each DAG.
		Note that if $S_j \in N_{k}$, then we have
		\bea
		\frac{\pi_\alpha \big( S_{j} \mid \tilde{\bfX}_n    \big)}{\pi_\alpha \big( S_{0j} \mid \tilde{\bfX}_n    \big)} 
		&=& \frac{\pi_\alpha^I (S_{k j} \mid \bfX_{n_{k}}) }{\pi_\alpha^I (S_{0k,j} \mid \bfX_{n_{k}}) }   \frac{f(S_{1j},\ldots, S_{Kj})}{f(S_{01,j},\ldots, S_{0K,j} )}  
		\eea
		and
		\bea
		\frac{f(S_{1j},\ldots, S_{Kj})}{f(S_{01,j},\ldots, S_{0K,j} )}  
		&=& \exp \Big[  c_{2j} \sum_{k' \neq k} \big\{ |S_{kj} \cap S_{0k',j}|  - |S_{0k,j} \cap S_{0k',j}|    \big\}  \Big] \\
		&\le& \exp \Big[  c_{2j} \sum_{k' \neq k} \big\{ |S_{kj} \cap S_{0k',j}|   \big\}  \Big] \\
		&\le& \exp \big\{ c_{2j} (K-1) (j-1)  \big\} \,\, \le \,\, \exp \big\{c_{2j}(j-1) K  \big\}  .
		\eea
		Then, we have
		\bea
		&& \sum_{k=1}^K \Big[  \bbE_0 \big\{ \pi_\alpha (S_j \in N_k, S_{kj} \supsetneq S_{0k,j} \mid \tilde{\bfX}_n)  \big\} \\
		&\le& \sum_{k=1}^K \sum_{S_j \in N_k, S_{kj} \supsetneq S_{0k,j}}   \bbE_0 \Big\{ \frac{\pi_\alpha ( S_j \mid \tilde{\bfX}_n)}{\pi_\alpha ( S_{0j} \mid \tilde{\bfX}_n)}  \Big\}  \\
		&=& \sum_{k=1}^K \sum_{S_j \in N_k, S_{kj} \supsetneq S_{0k,j}}   \bbE_0 \Big\{ \prod_{k'=1}^K \frac{\pi_\alpha^I ( S_{k'j} \mid \bfX_{n_{k'}})}{\pi_\alpha^I ( S_{0k',j} \mid \bfX_{n_{k'}})}  \Big\}  \frac{f(S_{01,j},\ldots, S_{0k-1,j}, S_{kj}, S_{0k+1,j},\ldots, S_{0K,j} )}{f(S_{01,j},\ldots, S_{0K,j} )} \\
		&\le& \sum_{k=1}^K \sum_{ S_{kj} \supsetneq S_{0k,j}}   \bbE_0 \Big\{ \frac{\pi_\alpha^I ( S_{kj} \mid \bfX_{n_{k}})}{\pi_\alpha^I ( S_{0k,j} \mid \bfX_{n_{k}})}  \Big\}  \exp \big\{c_{2j}(j-1) K  \big\}  \\
		&\lesssim& K p^{-c_1} R_j  \exp \big\{c_{2j}(j-1) K  \big\}  \\
		&\le&  K p^{- (c_1-1)} \exp \big\{c_{2j}(j-1) K  \big\}  \,\,\le\,\, K p^{ -\{(C_{\rm bm}-c_1-1)\wedge (c_1-1)\} }  \exp \big\{c_{2j}(j-1) K  \big\} 
		\eea		
		where the last inequality follows from the proof of Lemma 6.1 in \cite{lee2019minimax}.
		Let $N_{S_{kj}, \alpha, \chi^2}$ be the set defined in the proof of Theorem 3.1 in \cite{lee2019minimax}.
		Then,
		\bea
		&&\sum_{k=1}^K \bbE_0 \big\{ \pi_\alpha (S_j \in N_k , S_{kj} \nsupseteq S_{0k,j}  \mid \tilde{\bfX}_n)  \big\} \Big] \\
		&\le& \sum_{k=1}^K \sum_{S_{kj} \nsupseteq S_{0k,j} } \bbP_0 \big( \bfX_{n_k} \in N_{S_{kj}, \alpha, \chi^2} \big)
		\\
		&&+ \sum_{k=1}^K \sum_{S_{kj} \nsupseteq S_{0k,j} } \bbE_0 \Big\{ \frac{\pi_\alpha^I ( S_{kj} \mid \bfX_{n_{k}})}{\pi_\alpha^I ( S_{0k,j} \mid \bfX_{n_{k}})} I(\bfX_{n_k} \in N^c_{S_{kj}, \alpha, \chi^2} ) \Big\} \exp \big\{c_{2j}(j-1) K  \big\}  \\
		&\lesssim& \sum_{k=1}^K \sum_{S_{kj} \nsupseteq S_{0k,j} }\exp \Big\{  - \frac{(\epsilon')^2 \epsilon_0^2 }{64(1+2\epsilon_0)^2 } n_k \Big\}  \\
		&&+ \sum_{k=1}^K \sum_{S_{kj} \nsupseteq S_{0k,j} } \bbE_0 \Big\{ \frac{\pi_\alpha^I ( S_{kj} \mid \bfX_{n_{k}})}{\pi_\alpha^I ( S_{0k,j} \mid \bfX_{n_{k}})} I(\bfX_{n_k} \in N^c_{S_{kj}, \alpha, \chi^2} ) \Big\} \exp \big\{c_{2j}(j-1) K  \big\}  \\
		&\lesssim&  K \exp \Big\{  - \frac{(\epsilon')^2 \epsilon_0^2 }{128(1+2\epsilon_0)^2 } \min_k n_k \Big\}  +K \big(  p^{-C_{\rm bm} +1 } R_j +  p^{-C_{\rm bm} + c_1 +1}  \big) \exp \big\{c_{2j}(j-1) K  \big\}  \\
		&\lesssim& K p^{ -(C_{\rm bm}-c_1-1) } \exp \big\{c_{2j}(j-1) K  \big\}  \\
		&\le&  K p^{ -\{(C_{\rm bm}-c_1-1)\wedge (c_1-1)\} }  \exp \big\{c_{2j}(j-1) K  \big\} 
		\eea
		where the second and third inequalities follow from Lemma 6.2 and the proof of Theorem 3.1 in \cite{lee2019minimax}.
		The last inequality holds by Condition (P) because we assume $s_0 \ge C_{\rm bm} - c_1 -1$.

		Now consider the second term in \eqref{post_neq}.
		Note that if $S_j \in N_{k_1, k_2}$, then we have
		\bea
		\frac{\pi_\alpha \big( S_{j} \mid \tilde{\bfX}_n    \big)}{\pi_\alpha \big( S_{0j} \mid \tilde{\bfX}_n    \big)} 
		&=& \frac{\pi_\alpha^I (S_{k_1 j} \mid \bfX_{n_{k_1}}) }{\pi_\alpha^I (S_{0k_1,j} \mid \bfX_{n_{k_1}}) } \frac{\pi_\alpha^I (S_{k_2 j} \mid \bfX_{n_{k_2}}) }{\pi_\alpha^I (S_{0k_2,j} \mid \bfX_{n_{k_2}}) } \frac{f(S_{1j},\ldots, S_{Kj})}{f(S_{01,j},\ldots, S_{0K,j} )}  
		\eea
		and
		\bea
		\frac{f(S_{1j},\ldots, S_{Kj})}{f(S_{01,j},\ldots, S_{0K,j} )}  
		&\le& \exp \Big[ c_{2j} \big\{  |S_{k_1j}\cap S_{k_2 j}| + \sum_{k' \notin \{k_1, k_2\} }| S_{k_l j} \cap S_{0k' ,j}  |  \big\}   \Big] \\
		&\le& \exp \Big[  c_{2j} \big\{ j-1  + 2 (K-2) (j-1)  \big\}  \Big]  \\
		&\le& \exp \big\{ c_{2j} (j-1) 2 K  \big\} .
		\eea
		If $S_j \in N_{k_1, k_2}$, then one of the followings holds: (1) $S_{k_1 j} \supsetneq S_{0k_1,j}$ and $S_{k_2 j}\supsetneq S_{0k_2,j}$, (2) $S_{k_1 j} \supsetneq S_{0k_1,j}$ and $S_{k_2 j}\nsupseteq S_{0k_2,j}$, (3) $S_{k_1 j} \nsupseteq S_{0k_1,j}$ and $S_{k_2 j}\supsetneq S_{0k_2,j}$ or (4) $S_{k_1 j} \nsupseteq S_{0k_1,j}$ and $S_{k_2 j}\nsupseteq S_{0k_2,j}$.
		For example, by the similar arguments used in the previous paragraph,
		\bea
		&&  \sum_{k_1 < k_2} \sum_{S_j \in N_{k_1, k_2}, \atop S_{k_1 j} \supsetneq S_{0k_1,j} ,  S_{k_2 j}\nsupseteq S_{0k_2,j} } \bbE_0 \big\{ \pi_\alpha ( S_j \mid \tilde{\bfX}_n )  \big\}  \\
		&\lesssim& \sum_{k_1<k_2} \sum_{S_{k_2 j} \nsupseteq S_{0k_2,j} } \bbP_0 \big( \bfX_{n_{k_2}} \in N_{S_{k_2j}, \alpha, \chi^2}  \big) \\
		&+&  \sum_{k_1<k_2} \sum_{S_{k_1 j} \supsetneq S_{0k_1,j} } \bbE_0 \Big\{ \frac{\pi_\alpha^I (S_{k_1 j} \mid \bfX_{n_{k_1}}) }{\pi_\alpha^I (S_{0k_1,j} \mid \bfX_{n_{k_1}}) }  \Big\}    \sum_{S_{k_2 j}\nsupseteq S_{0k_2,j}  } \bbE_0 \Big\{   \frac{\pi_\alpha^I (S_{k_2 j} \mid \bfX_{n_{k_2}}) }{\pi_\alpha^I (S_{0k_2,j} \mid \bfX_{n_{k_2}}) }  I(\bfX_{n_{k_2}} \in N^c_{S_{k_2j}, \alpha, \chi^2})  \Big\} \\
		&&   \times \exp \big\{ c_{2j} (j-1) 2 K  \big\}   \\
		&\lesssim&  \sum_{k_1 < k_2} p^{ -2 \{ (C_{\rm bm} -c_1-1) \wedge (c_1-1) \}  }   \exp \big\{ c_{2j} (j-1) 2 K  \big\} \\
		&\le&  K^2 p^{ -2 \{ (C_{\rm bm} -c_1-1) \wedge (c_1-1) \}  }   \exp \big\{ c_{2j} (j-1) 2 K  \big\}   .
		\eea
		Thus, by applying the similar arguments to the above four cases, the expectation of the second term in \eqref{post_neq} is 
		\bea
		&& \sum_{k_1 < k_2}  \sum_{ S_j \in N_{k_1, k_2} }   \bbE_0 \big\{   \pi_\alpha \big( S_{j} \mid \tilde{\bfX}_n    \big) \big\}  \\
		&\le& \sum_{k_1 < k_2} \sum_{ S_j \in N_{k_1, k_2} }   \bbE_0 \Big\{   \frac{\pi_\alpha^I (S_{ j} \mid \bfX_{n}) }{\pi_\alpha^I (S_{0,j} \mid \bfX_{n}) }   \Big\}    \frac{f( S_j )}{f(S_{0j})}   \\
		&\le& \sum_{k_1 < k_2} \sum_{ S_{k_1 j} \neq S_{0k_1,j} }  \sum_{ S_{k_2 j} \neq S_{0k_2,j} } \bbE_0 \Big\{   \frac{\pi_\alpha^I (S_{k_1 j} \mid \bfX_{n_{k_1}}) }{\pi_\alpha^I (S_{0k_1,j} \mid \bfX_{n_{k_1}}) } \frac{\pi_\alpha^I (S_{k_2 j} \mid \bfX_{n_{k_2}}) }{\pi_\alpha^I (S_{0k_2,j} \mid \bfX_{n_{k_2}}) }   \Big\}    \exp \big\{ c_{2j} (j-1) 2 K  \big\}  \\
		&\lesssim& K^2 p^{ -2 \{ (C_{\rm bm} -c_1-1) \wedge (c_1-1) \}  }   \exp \big\{ c_{2j} (j-1) 2 K  \big\}  .
		\eea

		Note that if $S_j \in N_{k_1,\ldots, k_l}$, then we have
		\bea
		\frac{f(S_{1j},\ldots, S_{Kj})}{f(S_{01,j},\ldots, S_{0K,j} )}  
		&\le& \exp \Big[ c_{2j}\Big\{ \frac{l(l-1)}{2} (j-1)  + l (K-l) (j-1) \Big\}  \Big] \\
		&\le& \exp \big\{ c_{2j} (j-1) \, l \, K  \big\} .
		\eea
		Therefore,  by repeatedly applying the similar arguments, we have
		\bea
		\bbE_0  \Big\{  \pi_\alpha \Big( S_{j} \neq S_{0j } \mid \tilde{\bfX}_n    \Big)   \Big\}  
		&\lesssim& \sum_{k=1}^K   \Big[  \frac{ K \exp \big\{ c_{2j} (j-1) K  \big\}   }{ p^{ \{ (C_{\rm bm} -c_1-1) \wedge (c_1-1) \}  }  }   \Big]^k \\
		&\le& \sum_{k=1}^K   \Big[  \frac{ K e^K  }{ p^{ \{ (C_{\rm bm} -c_1-1) \wedge (c_1-1) \}  }  }   \Big]^k  \\
		&\lesssim&  \frac{ K e^K  }{ p^{ \{ (C_{\rm bm} -c_1-1) \wedge (c_1-1) \}  }  }  \,\, =\,\, o(p^{-1})  ,
		\eea
		because we assume $C_{\rm bm} > c_1 +2$, $c_1 >2$, $c_{2j} \le 1/(j-1)$ and $K=o(\log p)$. 	
	\end{proof}

	\begin{proof}[Proof of Theorem \ref{thm:advan1}]
		We will only show that \eqref{cond_prob_inc} holds for $k=1$ when $\cup_{k' =2}^K S_{0k',j} \subseteq S_{01,j}$, but one can easily check the other cases using similar arguments.
		Let $S_j = (S_{1j},\ldots, S_{Kj})$ and $S_{0j} = (S_{01,j},\ldots, S_{0K,j})$.
		Note that 
		\bea
		\frac{\pi_\alpha(S_j \mid \tilde{\bfX}_n) }{\pi_\alpha(S_{0j} \mid \tilde{\bfX}_n ) } 
		&=& \frac{\pi_\alpha^I (S_j \mid \tilde{\bfX}_n) }{\pi_\alpha^I (S_{0j} \mid \tilde{\bfX}_n) }  \frac{f(S_j) }{f(S_{0j})} .
		\eea
		Because we assume $\cup_{k' =2}^K S_{0k',j} \subseteq S_{01,j}$, it holds that
		$f( S_{1,j}, S_{02,j},\ldots, S_{0K, j} ) \le f( S_{01,j}, S_{02,j} ,\ldots, S_{0K, j} )$ for any $S_{1j} \neq S_{01,j}$.		
		Thus, 
		\bea
		\frac{\pi_\alpha(S_{1j} , S_{02, j},\ldots, S_{0K,j} \mid \tilde{\bfX}_n) }{\pi_\alpha(S_{01,j}, S_{02,j},\ldots, S_{0K,j} \mid \tilde{\bfX}_n ) } 
		&=& \frac{\pi_\alpha^I (S_{1j} \mid \tilde{\bfX}_n) }{\pi_\alpha^I (S_{01,j} \mid \tilde{\bfX}_n) }  \frac{f( S_{1j}, S_{02,j},\ldots, S_{0K,j} ) }{f(S_{01,j}, S_{02,j},\ldots, S_{0K,j} )}  \\
		&\le& \frac{\pi_\alpha^I (S_{1j} \mid \tilde{\bfX}_n) }{\pi_\alpha^I (S_{01,j} \mid \tilde{\bfX}_n) }  
		\eea
		for any $S_{1j} \neq S_{01, j}$.
		Then, we have
		\bea
		\frac{1 - \pi_\alpha(S_{01,j} \mid S_{02, j},\ldots, S_{0K,j} , \tilde{\bfX}_n) }{\pi_\alpha(S_{01,j} \mid S_{02,j},\ldots, S_{0K,j} \tilde{\bfX}_n ) }  
		&=& \sum_{S_{1j} \neq S_{01, j} } \frac{\pi_\alpha(S_{1j} \mid S_{02, j},\ldots, S_{0K,j} , \tilde{\bfX}_n) }{\pi_\alpha(S_{01,j} \mid S_{02,j},\ldots, S_{0K,j} \tilde{\bfX}_n ) }   \\
		&=& \sum_{S_{1j} \neq S_{01, j} } \frac{\pi_\alpha(S_{1j} , S_{02, j},\ldots, S_{0K,j} \mid \tilde{\bfX}_n) }{\pi_\alpha(S_{01,j}, S_{02,j},\ldots, S_{0K,j} \mid \tilde{\bfX}_n ) }   \\
		&\le& \sum_{S_{1j} \neq S_{01, j} }  \frac{\pi_\alpha^I (S_{1j} \mid \tilde{\bfX}_n) }{\pi_\alpha^I (S_{01,j} \mid \tilde{\bfX}_n) }  = \frac{1 - \pi_\alpha^I (S_{01,j} \mid \tilde{\bfX}_n) }{\pi_\alpha^I (S_{01,j} \mid \tilde{\bfX}_n) }   ,
		\eea
		which implies 
		\bea
		\pi_\alpha \big(  S_{01, j} \mid S_{01,j},\ldots, S_{0 2, j} , \ldots, S_{0K, j} , \tilde{\bfX}_n  \big)   
		&\ge& \pi_\alpha^I ( S_{01, j}  \mid \bfX_{n_k} )   .
		\eea
	\end{proof}

	\begin{proof}[Proof of Theorem \ref{thm:advan2}]
		In this proof, let $S_j = (S_{1j}, \ldots, S_{j-1 j})$ and $S_{0j}=(S_{0,1j},\ldots, S_{0, j-1 j})$ be the (common) support of the $j$th row of the lower triangular part of $S_A$ and $S_0$, respectively.
		Let 
		\bea
		\tilde{\pi}_\alpha ( S_{A} \mid \tilde{\bfX}_n )  &\propto& \pi_\alpha ( S_{A_1}= \cdots =S_{A_K} = S_{A} \mid \tilde{\bfX}_n ) 
		\eea
		be the joint posterior for $(S_{A_1}, \ldots, S_{A_K})$ restricted to common supports.
		Then,
		\bean
		\tilde{\pi}_\alpha ( S_{A} \neq S_0 \mid \tilde{\bfX}_n ) 
		&\le& \sum_{j=2}^p \tilde{\pi}_\alpha ( S_{j} \neq S_{0j} \mid \tilde{\bfX}_n )\nonumber \\
		&=& \sum_{j=2}^p \tilde{\pi}_\alpha ( S_{j} \supsetneq S_{0j} \mid \tilde{\bfX}_n )  + \sum_{j=2}^p \tilde{\pi}_\alpha ( S_{j} \nsupseteq S_{0j} \mid \tilde{\bfX}_n ) . \label{two_common}
		\eean
		
		The expectation of the first part of \eqref{two_common} is bounded above by
		\bea
		\sum_{j=2}^p \bbE_0 \Big\{  \tilde{\pi}_\alpha ( S_{j} \supsetneq S_{0j} \mid \tilde{\bfX}_n )  \Big\} 
		&\le& \sum_{j=2}^p \sum_{S_{j} \supsetneq S_{0j} }  \bbE_0 \Big\{  \frac{\tilde{\pi}_\alpha ( S_j \mid \tilde{\bfX}_n )}{\tilde{\pi}_\alpha (  S_{0j} \mid \tilde{\bfX}_n )}  \Big\} \\
		&=&  \sum_{j=2}^p \sum_{S_{j} \supsetneq S_{0j} }  \bbE_0 \Big\{ \prod_{k=1}^K \frac{\tilde{\pi}_\alpha^I ( S_{kj} = S_j \mid \bfX_{n_k} )}{\tilde{\pi}_\alpha^I (  S_{kj} = S_{0j} \mid \bfX_{n_k} )}  \Big\}\frac{{f}(S_j,\ldots, S_j )}{{f}(S_{0j},\ldots, S_{0j} )}  \\
		&\lesssim&  \sum_{j=2}^p \sum_{S_{j} \supsetneq S_{0j} }    \Big\{ \frac{{\pi}(S_j) }{{\pi}(S_{0j})} \Big\}^{ K} c_{\alpha, \gamma}^{K(|S_j|- |S_{0j}| ) } \exp \big\{ c_{2j} K (K-1)  (|S_j| - |S_{0j}| )  \big\}  \\
		&\lesssim&  \sum_{j=2}^p  c_{\alpha,\gamma}^{K  }  p^{-c_1  K }  R_j \exp \big\{ c_{2j} K (K-1)   ( j-1 )  \big\}  \\
		&\lesssim& \exp \Big\{  - c_1  K \log p + \log p + K \log c_{\alpha,\gamma} + K (K-1) \Big\}  \,\,=\,\, o(1) ,
		\eea
		where $c_{\alpha,\gamma} = (1+ \alpha/\gamma)^{-1/2}\{2/(1-\alpha) \}^{1/2}$, by the proof of Lemma 6.1 in \cite{lee2019minimax}, $c_1>1$, $c_{2j} \le 1/(j-1)$ and $K  = o(\log p)$.
		
		On the other hand, the expectation of the second part of \eqref{two_common} is 
		\bean
		&& \sum_{j=2}^p \bbE_0 \Big\{  \tilde{\pi}_\alpha ( S_{j} \nsupseteq S_{0j} \mid \tilde{\bfX}_n )  \Big\} \nonumber \\
		&\le& \sum_{j=2}^p \sum_{S_{j} \nsupseteq S_{0j} } \sum_{k=1}^K \bbP_0 \big( \bfX_{n_k} \in N_{S_{j}, \alpha, \chi^2 } \big) \nonumber\\
		&&+ \sum_{j=2}^p \sum_{S_{j} \nsupseteq S_{0j} } \bbE_0 \Big\{ \tilde{\pi}_\alpha (S_j \nsupseteq S_{0j} \mid \tilde{\bfX}_n ) I( \bfX_{n_k} \in N^c_{S_{j}, \alpha, \chi^2 } , \forall k )  \Big\}  \nonumber\\
		&\lesssim& p K \exp \Big\{  - \frac{(\epsilon ')^2 \epsilon_0^2 }{128(1+2\epsilon_0)^2 }  \min_k n_k \Big\} \label{pk_o1}\\
		&&+ \sum_{j=2}^p \sum_{S_{j} \nsupseteq S_{0j} } \bbE_0 \Big\{ \prod_{k=1}^K  \frac{ \tilde{\pi}_\alpha^I (S_{kj}= S_j \mid \bfX_{n_k} ) }{\tilde{\pi}_\alpha^I (S_{kj}= S_{0j} \mid \bfX_{n_k}) } I( \bfX_{n_k} \in N^c_{S_{j}, \alpha, \chi^2 }   )  \Big\} \frac{{f}(S_j,\ldots, S_j )}{{f}(S_{0j},\ldots, S_{0j} )} . \label{rem_o1}
		\eean
		Note that \eqref{pk_o1} is of order $o(1)$ and
		\bea
		\frac{{f}(S_j,\ldots, S_j )}{{f}(S_{0j},\ldots, S_{0j} )}
		&\le& \exp \big\{ c_{2j} K (K-1) \big| |S_j|- |S_{0j}|  \big|  \big\} \,\,\le\,\, \exp \big\{  K(K-1)  \big\}  .
		\eea
		By the proof of Theorem 3.1 in \cite{lee2019minimax}, 
		\bea
		&&  \bbE_0 \Big\{ \prod_{k=1}^K \frac{\tilde{\pi}_\alpha^I (S_{kj}= S_j \mid \bfX_{n_k} ) }{\tilde{\pi}_\alpha^I (S_{kj}= S_{0j} \mid \bfX_{n_k}) } I( \bfX_{n_k} \in N^c_{S_{j}, \alpha, \chi^2 }   )  \Big\}  \\
		&\le&  \prod_{k=1}^K \frac{{\pi}(S_j) }{{\pi}(S_{0j}) } \nu_1^{|S_{0j}| - |S_j|} \nu_2^{|S_j|-|S_{0j}\cap S_j|} \exp \Big\{ - \frac{\alpha(1-\alpha)}{4}\frac{\epsilon_0^2(1-2\epsilon_0)^2}{4} \, n_k \| a_{0k, S_{0j}\cap S_{j}^c} \|_2^2  \Big\}   \\
		&& +  \Big\{ \frac{{\pi}(S_j) }{{\pi}(S_{0j}) } \nu_1^{|S_{0j}| - |S_j|} \nu_2^{|S_j|-|S_{0j}\cap S_j|}  \Big\}^K \sum_{k=1}^K \bbP_0( \bfX_{n_k}\in N_{j, S_{kj}})    \\
		&\le&   \Big\{ \frac{{\pi}(S_j) }{{\pi}(S_{0j}) } \nu_1^{|S_{0j}| - |S_j|} \nu_2^{|S_j|-|S_{0j}\cap S_j|}  \Big\}^K    \exp \Big\{ - (|S_{0j}| - |S_{j} \cap S_{0j}|  )  C_{\rm bm}   K  \log p  \Big\}   \\
		&&  +  \Big\{ \frac{{\pi}(S_j) }{{\pi}(S_{0j}) } \nu_1^{|S_{0j}| - |S_j|} \nu_2^{|S_j|-|S_{0j}\cap S_j|}  \Big\}^K  \sum_{k=1}^K 4\exp \big(- n_k \epsilon_0^2/2 \big)    \\
		&\lesssim& \Big\{ \frac{{\pi}(S_j) }{{\pi}(S_{0j}) } \Big\}^{ K} \Big\{ \nu_1^{|S_{0j}| - |S_j|} \nu_2^{|S_j|-|S_{0j}\cap S_j|}  \Big\}^K    \exp \Big\{ - (|S_{0j}| - |S_{j} \cap S_{0j}|  )  C_{\rm bm}    K  \log p  \Big\} 
		\eea
		where $N_{j, S_{kj}}$ is the set defined in the proof of Theorem 3.1 in \cite{lee2019minimax}, $\nu_1=(1+ \alpha/\gamma)^{1/2}$ and $\nu_2 = \{1- (\alpha + \nu_0/n)  / (1- 4 \sqrt{\epsilon'} - 5 \epsilon')\}^{-1/2}$.
		Note that the last inequality holds due to $  K \log p = o( \min_k n_k )$ and
		\bea
		&& \exp \Big\{ - \frac{\alpha(1-\alpha)}{4}\frac{\epsilon_0^2(1-2\epsilon_0)^2}{4}   \sum_{k=1}^K n_k \| a_{0k, S_{0j}\cap S_{j}^c} \|_2^2 \Big\} \\
		&\le& \exp \Big\{ - \frac{\alpha(1-\alpha)}{4}\frac{\epsilon_0^2(1-2\epsilon_0)^2}{4}(|S_{0j}| - |S_{j} \cap S_{0j}|  ) \min_{(j,l): a_{01,jl} \neq 0 }  \sum_{k=1}^K n_k a_{0k, jl}^2  \Big\} \\
		&\le& \exp \Big\{ - (|S_{0j}| - |S_{j} \cap S_{0j}|  )  C_{\rm bm}  K \log p  \Big\}  
		\eea		
		for some constant $C>0$ by condition (B3). 		
		Thus, \eqref{rem_o1} is bounded above by
		\bea
		&&\sum_{j=2}^p \sum_{S_{j} \nsupseteq S_{0j} } \Big\{ \frac{{\pi}(S_j) }{{\pi}(S_{0j}) } \Big\}^{K} \Big\{ \nu_1^{|S_{0j}| - |S_j|} \nu_2^{|S_j|-|S_{0j}\cap S_j|}  \Big\}^K \\
		&&\quad\quad \times  \exp \Big\{ - (|S_{0j}| - |S_{j} \cap S_{0j}|  )  C_{\rm bm}   K\log p + K(K-1)  \Big\}  \\
		&\lesssim& \exp  \big\{ -(C_{\rm bm} -c_1 -2) K \log p + K (K-1)  \big\}  \,\, =\,\, o(1) ,
 		\eea
		because $C_{\rm bm} > c_1+2$ and $K=o(\log p)$.
		This completes the proof.		
	\end{proof}

	\begin{proof}[Proof of Theorem \ref{thm:advan3}]
		Similarly to the proof of Theorem \ref{thm:advan2}, we have 
		\bea
		\sum_{j=2}^p \bbE_0 \Big\{  \tilde{\pi}_\alpha^* ( S_{j} \supsetneq S_{0j} \mid \tilde{\bfX}_n )  \Big\} 
		&\le& \sum_{j=2}^p \sum_{S_{j} \supsetneq S_{0j} }  \bbE_0 \Big\{ \prod_{k=1}^K \frac{\tilde{\pi}_\alpha^{I,*} ( S_{kj} = S_j \mid \bfX_{n_k} )}{\tilde{\pi}_\alpha^{I,*} (  S_{kj} = S_{0j} \mid \bfX_{n_k} )}  \Big\}\frac{\tilde{f}(S_j,\ldots, S_j )}{\tilde{f}(S_{0j},\ldots, S_{0j} )}  \\
		&\lesssim&  \sum_{j=2}^p \sum_{S_{j} \supsetneq S_{0j} }    \Big\{ \frac{\tilde{\pi}(S_j) }{\tilde{\pi}(S_{0j})} \Big\}^{ K} c_{\alpha, \gamma}^{K(|S_j|- |S_{0j}| ) } \exp  (K-1) \\
		&=& \sum_{j=2}^p \sum_{S_{j} \supsetneq S_{0j} }    \Big\{ \frac{{\pi}(S_j) }{{\pi}(S_{0j})} \Big\} c_{\alpha, \gamma}^{K(|S_j|- |S_{0j}| ) } \exp  (K-1)  \\
		&\lesssim&  \sum_{j=2}^p  c_{\alpha,\gamma}^{K  }  p^{-c_1  }  R_j \exp  (K-1) \\
		&\lesssim& \exp \Big\{  - (c_1-1) \log p + K \log c_{\alpha,\gamma} + K  \Big\} \,\,=\,\, o(1) ,
		\eea
		where $c_{\alpha,\gamma} = (1+ \alpha/\gamma)^{-1/2}\{2/(1-\alpha) \}^{1/2}$ and $\pi_\alpha^{I,*} (S_{kj} \mid \bfX_{n_k})\propto f(\bfX_{n_k} \mid S_{kj}) \tilde{\pi}(S_{kj})$, because $c_1>1$, $c_{2j} \le 1/(j-1)$ and $K  = o(\log p)$.
		The second and third inequalities hold by the proof of Lemma 6.1 in \cite{lee2019minimax} and
		\bea
		\frac{\tilde{f}(S_j,\ldots, S_j )}{\tilde{f}(S_{0j},\ldots, S_{0j} )}
		&\le& \exp \big\{ c_{2j} (K-1) \big| |S_j|- |S_{0j}|  \big|  \big\} \,\,\le\,\, \exp  (K-1) .
		\eea
		
		Furthermore, 
		\bean
		&& \sum_{j=2}^p \bbE_0 \Big\{  \tilde{\pi}_\alpha^* ( S_{j} \nsupseteq S_{0j} \mid \tilde{\bfX}_n )  \Big\} \nonumber \\
		&\lesssim& p K \exp \Big\{  - \frac{(\epsilon ')^2 \epsilon_0^2 }{128(1+2\epsilon_0)^2 }  \min_k n_k \Big\} \nonumber \\
		&&+ \sum_{j=2}^p \sum_{S_{j} \nsupseteq S_{0j} } \bbE_0 \Big\{ \prod_{k=1}^K  \frac{ \tilde{\pi}_\alpha^{I,*} (S_{kj}= S_j \mid \bfX_{n_k} ) }{\tilde{\pi}_\alpha^{I,*} (S_{kj}= S_{0j} \mid \bfX_{n_k}) } I( \bfX_{n_k} \in N^c_{S_{j}, \alpha, \chi^2 }   )  \Big\} \frac{\tilde{f}(S_j,\ldots, S_j )}{\tilde{f}(S_{0j},\ldots, S_{0j} )}  , \label{rem_o2}
		\eean
		where
		\bea
		&&  \bbE_0 \Big\{ \prod_{k=1}^K \frac{\tilde{\pi}_\alpha^{I,*} (S_{kj}= S_j \mid \bfX_{n_k} ) }{\tilde{\pi}_\alpha^{I,*} (S_{kj}= S_{0j} \mid \bfX_{n_k}) } I( \bfX_{n_k} \in N^c_{S_{j}, \alpha, \chi^2 }   )  \Big\}  \\
		&\lesssim& \frac{{\pi}(S_j) }{{\pi}(S_{0j}) }  \Big\{ \nu_1^{|S_{0j}| - |S_j|} \nu_2^{|S_j|-|S_{0j}\cap S_j|}  \Big\}^K    \exp \Big\{ - (|S_{0j}| - |S_{j} \cap S_{0j}|  )  C_{\rm bm}    \log p  \Big\}  ,
		\eea
		by the similar arguments used in the proof of Theorem \ref{thm:advan2} and condition (C3).
		Note that the last inequality holds due to $\log p = o( \min_k n_k )$.

		Therefore, \eqref{rem_o2} is bounded above by
		\bea
		&&\sum_{j=2}^p \sum_{S_{j} \nsupseteq S_{0j} }  \frac{{\pi}(S_j) }{{\pi}(S_{0j}) }  \Big\{ \nu_1^{|S_{0j}| - |S_j|} \nu_2^{|S_j|-|S_{0j}\cap S_j|}  \Big\}^K \\
		&&\quad\quad \times  \exp \Big\{ - (|S_{0j}| - |S_{j} \cap S_{0j}|  )  C_{\rm bm}     \log p + (K-1) \Big\}  \\
		&\lesssim& \exp  \big\{ -(C_{\rm bm} -c_1 -2)   \log p + 2K   \big\}  \,\, =\,\, o(1) ,
		\eea
		because $C_{\rm bm} > c_1+2$ and $K=o(\log p)$.
		This completes the proof.		
	\end{proof}

	\bibliographystyle{dcu}
	\bibliography{refs}
	
\end{document}